\documentclass[onecolumn, prx, superscriptaddress,notitlepage, nofootinbib]{revtex4-2}
\usepackage{graphicx}%
\usepackage{dcolumn}%
\usepackage{bm}%
\usepackage[usenames,dvipsnames]{color}
\usepackage{amsthm}
\usepackage{enumitem} 

\newcolumntype{P}[1]{>{\centering\arraybackslash}p{#1}}
\usepackage{mathtools}
\usepackage[most]{tcolorbox}
\usepackage{multirow}
\usepackage{gensymb}
\usepackage[normalem]{ulem}
\usepackage{CJK}
\usepackage{comment}
\usepackage[colorlinks, linkcolor=blue,anchorcolor=blue,citecolor=blue,urlcolor=blue]{hyperref}
\usepackage{amssymb}
\usepackage{pifont}
\usepackage{physics}
\usepackage{natbib}
\newtheorem{definition}{Definition}
\newtheorem{theorem}{Theorem}
\usepackage{float} 
\makeatletter 
\let\newfloat\newfloat@ltx 
\makeatother
\usepackage{algorithm}
\usepackage{algpseudocode}
\usepackage{caption}
\usepackage{subcaption}
\usepackage{amsmath}
\captionsetup{format=plain, font=small, labelfont=bf}
\newcommand{\g}{\mathfrak{g}}

\usepackage{booktabs}
\usepackage{colortbl}
\usepackage{caption}
\captionsetup{justification   = raggedright,
              singlelinecheck = false}

\newtheorem{fact}{Fact}[theorem]

\newtheorem{corollary}[theorem]{\textbf{Corollary}}

\begin{document}
\title{Prospects of Privacy Advantage in Quantum Machine Learning} 
\author{Jamie Heredge}
\affiliation{Global Technology Applied Research, JPMorgan Chase, New York, NY 10017}
\affiliation{School of Physics, The University of Melbourne, Parkville, VIC 3010, Australia}
\author{Niraj Kumar}
\email{niraj.x7.kumar@jpmchase.com}
\affiliation{Global Technology Applied Research, JPMorgan Chase, New York, NY 10017}
\author{Dylan Herman}
\affiliation{Global Technology Applied Research, JPMorgan Chase, New York, NY 10017}
\author{Shouvanik Chakrabarti}
\affiliation{Global Technology Applied Research, JPMorgan Chase, New York, NY 10017}
\author{Romina Yalovetzky}
\affiliation{Global Technology Applied Research, JPMorgan Chase, New York, NY 10017}
\author{Shree Hari Sureshbabu}
\affiliation{Global Technology Applied Research, JPMorgan Chase, New York, NY 10017}
\author{Changhao Li}
\affiliation{Global Technology Applied Research, JPMorgan Chase, New York, NY 10017}
\author{Marco Pistoia}
\affiliation{Global Technology Applied Research, JPMorgan Chase, New York, NY 10017}
\date{\today}

\begin{abstract}
    
Ensuring data privacy in machine learning models is critical, particularly in distributed settings where model gradients are typically shared among multiple parties to allow collaborative learning. Motivated by the increasing success of recovering input data from the gradients of classical models, this study addresses a central question: \emph{How hard is it to recover the input data from the gradients of quantum machine learning models?} Focusing on variational quantum circuits (VQC) as learning models, we uncover the crucial role played by the dynamical Lie algebra (DLA) of the VQC ansatz in determining privacy vulnerabilities. While the DLA has previously been linked to the classical simulatability and trainability of VQC models, this work, for the first time, establishes its connection to the privacy of VQC models. In particular, we show that properties conducive to the trainability of VQCs, such as a polynomial-sized DLA, also facilitate the extraction of detailed \emph{snapshots} of the input. We term this a weak privacy breach, as the snapshots enable training VQC models for distinct learning tasks without direct access to the original input. Further, we investigate the conditions for a strong privacy breach where the original input data can be recovered from these snapshots by classical or quantum-assisted polynomial time methods. We establish conditions on the encoding map such as classical simulatability, overlap with DLA basis, and its Fourier frequency characteristics that enable such a privacy breach of VQC models. Our findings thus play a crucial role in detailing the prospects of quantum privacy advantage by guiding the requirements for designing quantum machine learning models that balance trainability with robust privacy protection.
\end{abstract}

\maketitle

\section{Introduction}

In the contemporary technological landscape, data privacy concerns command increasing attention, particularly within the domain of machine learning (ML) models that are trained on sensitive datasets. Privacy concerns are widespread in many different applications, including financial records \cite{liu2023efficient, awosika2023transparency}, healthcare information \cite{Kaissis2020SecurePA, Ahamed2023, Aguiar2023}, and location data \cite{park2023fedgeo}, each providing unique considerations. Furthermore, the multi-national adoption of stringent legal frameworks \cite{Albrecht2016HowTG} has further amplified the urgency to improve data privacy.

The introduction of distributed learning frameworks, such as federated learning \cite{Brauneck2023, Tong2022, mcmahan2023communicationefficient}, not only promises increased computational efficiency but also demonstrates the potential for increased privacy in ML tasks. In federated learning, each user trains a machine learning model, typically a neural network, locally on their device using their confidential data, meaning that they only need to send their model gradients to the central server, which aggregates gradients of all users to calculate the model parameters for the next training step. As the user does not send their confidential data, but rather their training gradients, this was proposed as the first solution to enable collaborative learning while preventing data leakage. However, subsequent works have shown that neural networks are particularly susceptible to gradient inversion-based attacks to recover the original input data \cite{Zhu19, huang2021evaluating, zhao2020idlg, GeipingBD020, Yin21}. To mitigate the above issue, classical techniques have been proposed to enhance the privacy of distributed learning models, ranging from gradient encryption-based methods \cite{Phong2018}, the addition of artificial noise in the gradients to leverage differential-privacy type techniques \cite{mcmahan2023communicationefficient}, or strategies involving the use of batch training to perform gradient mixing \cite{eloul2022enhancing}. These techniques, although mitagative in nature, are not fully robust since they either still leak some input information, add substantial computational overhead while training the model in the distributed setting, or result in reduced performance of the model. 

A natural question that follows is whether quantum machine learning can help mitigate the privacy concerns that their classical counterparts exhibit. Specifically, one is interested in exploring the fundamental question underpinning the privacy of quantum models: \emph{Given the gradients of a quantum machine learning model, how difficult is it to reconstruct the original classical data inputs?} In search of quantum privacy advantages, several quantum distributed learning proposals have been previously introduced \cite{Huang2022QuantumFL, Qi2023OptimizingQF, Chehimi2021QuantumFL, li2023blind, Gurung2023DecentralizedQF, Lusnig_2024, gilboa2023exponential, li2024privacy, koyasu2023distributed}. Previous methods for improving privacy in a federated learning context have ranged from the use of blind quantum computing \cite{Li2021QuantumFL}, high-frequency encoding circuits \cite{kumar2023expressive}, and hybrid quantum-classical methods that combine pre-trained classical models with quantum neural networks \cite{chen2021}. In particular, the work of \cite{kumar2023expressive} considered variational quantum circuits (VQC) as quantum machine learning models and suggested that highly expressive product encoding maps along with an overparameterized hardware efficient ansatz (HEA) would necessitate an exponential amount of resources (in terms of the number of qubits $n$) for an attacker to learn the input from the gradients. Their work, although the first and sole one to date to theoretically analyze the privacy of a specific VQC model architecture, has certain key drawbacks. The first is that overparameterization of a HEA leads to an untrainable model, since it mixes very quickly to a 2-design \cite{haah2024efficient} and thus leads to a barren plateau phenomenon \cite{mcclean2018barren}. The authors enforced the requirement of overparameterization to ensure that there are no spurious local minima in the optimization landscape and that all local minima are exponentially concentrated toward global minima \cite{anschuetz2022}. However, this requires the HEA to have an exponential depth and thus an exponential number of parameters, which precludes efficient training due to an exponential memory requirement to store and update the parameters. Secondly, the difficulty of inverting gradients to recover data primarily stems from the high expressivity, characterized in this case by an exponentially large number of non-degenerate frequencies of the generator Hamiltonian of the encoding map. Introducing high-frequency terms in the encoding map may not be an exclusive quantum effect, as classical machine learning models could also be enhanced by initially loading the data with these high-frequency feature maps \cite{tancik2020fourier}.

While previous studies have aimed to highlight the advantages of employing VQC models in safeguarding input privacy, none have convincingly addressed what sets VQC models apart from classical neural networks in their potential to provide a quantum privacy advantage. A critical aspect missing in a comprehensive examination of the privacy benefits offered by VQC models in a privacy framework tailored for them. Such a framework should avoid dependence on specific privacy-enhancing procedures or architectures and instead focus on exploring the fundamental properties of VQC models that result in input privacy.

\begin{table}[t!]
\centering
\begin{tabular}{|>{\raggedright\arraybackslash}p{2.6cm}||>{\raggedright\arraybackslash}P{4cm}|>{\raggedright\arraybackslash}P{3cm}|>{\raggedright\arraybackslash}P{7cm} |}
\hline
 \textbf{Privacy Breach}& \textbf{Description}& \textbf{Complexity} & \textbf{Requirements}\\
\hline
\hline
 Weak  & Snapshot recovery (Sec~\ref{sec:sanpshot_recovery})& Algorithm~\ref{alg:snapshot_recov}:  $\mathcal{O}(\text{poly}(\text{dim}(\g)))$& $\mathcal{O}(\text{poly}(n))$ sized DLA + LASA condition (Def~\ref{def:lasa})+ Slow Pauli Expansion (Def~\ref{def:slow_pe}) \\
\hline
 Strong  & Snapshot inversion for local Pauli encoding (Sec~\ref{sec:general-pauli})& Algorithm~\ref{alg:snapshot_inver_gen_pauli}: $\mathcal{O}(\text{poly}(n, 1/\epsilon)$ &
 Snapshot recovery requirement + Separable state with $\rho_J(\mathbf{x})$ parameterized by subset $\mathsf{x}_J \subseteq \mathbf{x}$
 \begin{itemize}[noitemsep, topsep=0pt, left=-1pt]
     \item $\textup{dim}(\mathbf{x}_J)  =  \mathcal{O}(1)$
     \item each $x_k$ is encoded at most $R=\mathcal{O}(\textup{poly}(n))$ times
     \item Snapshot components with non-zero overlap w.r.t. $\rho_J(\mathbf{x}_J)$ has cardinality at least $\textup{dim}(\mathbf{x}_J)$.
 \end{itemize} \\
\hline
 \multirow{3}{2.6em}{\raggedright Strong } & Snapshot inversion for generic encoding (Sec~\ref{sec:approx-inversion}) & Grid Search : $\mathcal{O}\left(\left(\frac{L}{\epsilon}\right)^d \right)$ & Privacy breach is not possible \\ 
\hline
\end{tabular}
\caption{Summary of results on the privacy guarantees and complexity provided by studied attack models on various VQC models. We consider two privacy breach scenarios involving VQCs : \emph{weak privacy} breach and \emph{strong privacy} breach for classical or quantum-assisted polynomial time methods. Weak privacy breach concerns the recovery of the meaningful snapshots of the input encoded state, allowing training VQC models for distinct learning tasks without requiring access to the input. Strong privacy breach concerns subsequently inverting the snapshots to recover the original input. We consider the snapshot invertibility for local Pauli encoding map which admits an efficient (polynomial in the number of qubit $n$) algorithm if the requirements stated in the table is met. For the case of generic encoding maps where the VQC is considered as a black-box $L$-Lipschitz function, snapshot invertibility requires performing the grid search which scales exponentially in the input dimension $d$, and thus it rules out efficient privacy breaches.}
\label{tab:summary2}
\end{table}

To address the above concerns, we introduce a framework designed to assess the possibility of retrieving classical inputs from the gradients observed in VQC models. We consider VQCs that satisfy the Lie algebra supported ansatz (LASA) property, which has been key in establishing connections with the trainability and classical simulatability of VQCs \cite{fontana2023adjoint, somma2004nature, goh2023lie}.  Our study systematically differentiates the separate prerequisites for input reconstruction across both the variational ansatz and encoding map architectures of these VQC models as summarized in Table~\ref{tab:summary2}. Our first result concerns the properties of variational ansatz and the measurement operator of the VQC. Specifically, we show that when the VQC satisfies the LASA condition, i.e., when the measurement operator is within the dynamical Lie algebra (DLA) of the ansatz, and when the DLA scales polynomially with the number of qubits, it is possible to efficiently extract meaningful \emph{snapshots} of the input, enabling training and evaluation of VQC models for other learning tasks without having direct access to the original input. We call this the \emph{weak privacy} breach of the model. Further, we investigate conditions for \emph{strong privacy} breach, i.e., recoverability of the original input by classical or quantum-assisted polynomial time methods. Fully reconstructing the input data from these snapshots to perform a strong privacy breach presents a further challenge which we show is dependent on properties of the encoding map, such as the hardness of classically simulating the encoding, overlap of the DLA basis with encoding circuit generators, and its Fourier frequency characteristics. The two types of privacy breach we introduce are summarised in Figure~\ref{fig:privacy-catagories} while more specific definitions regarding snapshots, recoverability, and invertibility are provided in Section~\ref{sec:input_recover}.

This investigation presents a comprehensive picture of strategies to extract the key properties of VQCs to provide robust privacy guarantees while ensuring that they are still trainable. We structure our paper in the following manner. Appendix~\ref{sec:notations} provides the notation used in this work. Sec~\ref{sec:general_framework} provides a general framework for studying privacy with VQC. This includes describing the VQC framework, providing Lie theoretic definitions required for this work, and the privacy definitions in terms of input recoverability. Next, Sec~\ref{sec:sanpshot_recovery} followed by \ref{sec:invertibility} covers a detailed analysis of the snapshot recoverability from the gradients, and snapshot inversion to recover the input, respectively. Sec~\ref{sec:train_privacy} establishes the connections between privacy and the well-studied trainability of VQCs, and Sec~\ref{sec:future_dir} consequently highlights the future directions of enabling robust privacy with quantum machine learning models. We finally conclude our results in Sec~\ref{sec:conclusion}.

\begin{figure}[t!]
    \centering
    \includegraphics[width=0.9\textwidth]{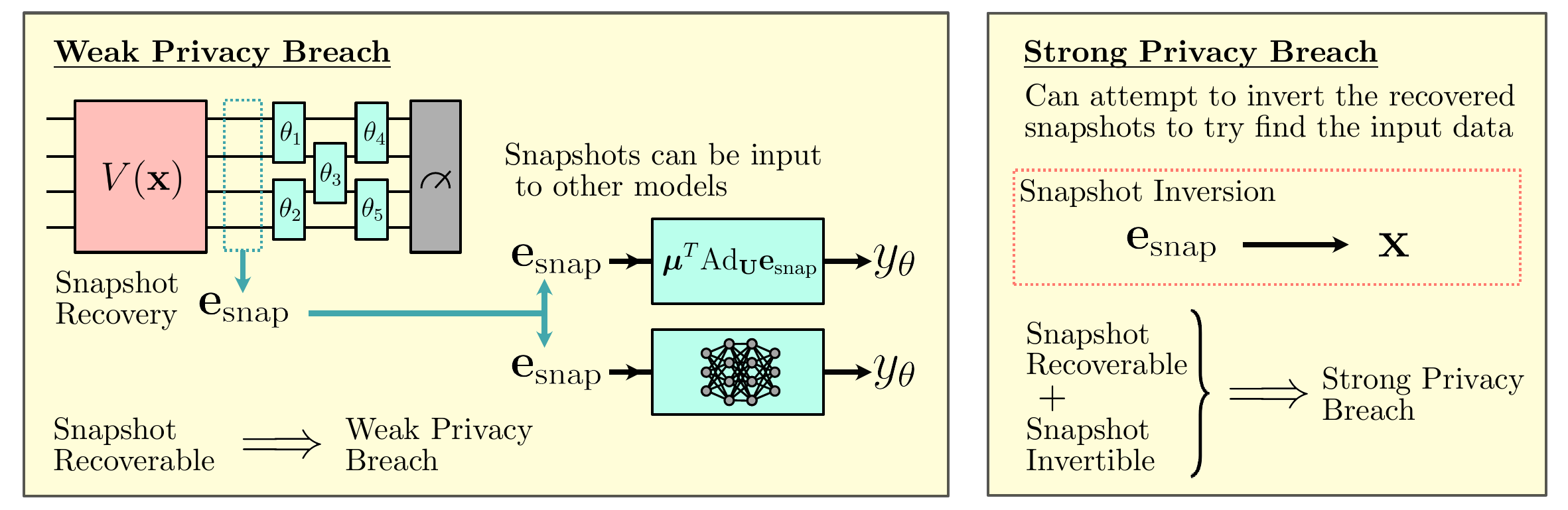}
    \caption{Overview of the general framework and definitions. Weak privacy breach corresponds to attacks where snapshots of the data are retrieved. These can be used as inputs to other models, without explicitly needing the exact data, allowing one to potentially learn characteristics of the data. If these snapshots can then be further inverted to retrieve the input data $\mathbf{x}$ explicitly, we say the attack has succeeded in a strong privacy breach.}
    \label{fig:privacy-catagories}
\end{figure}

\section{General Framework} \label{sec:general_framework}

\subsection{Variational Quantum Circuits for Machine Learning}

A variational quantum circuit (VQC) is described in the following manner. We consider the $d$ dimensional input vector $\mathbf{x} \in \mathcal{X} \subset \mathbb{R}^d$, which is loaded into the quantum encoding circuit $V(\mathbf{x})$ of $n$ qubits to produce a feature map with the input state mapping,
\begin{equation}
    \rho(\mathbf{x}) = V(\mathbf{x})\ket{0}^{\otimes n}\bra{0}^{\otimes n} V(\mathbf{x})^{\dagger}.
    \label{eq:quantum-encoding-def}
\end{equation}
This operation loads the input vector of dimension $d$ to a Hilbert space $\mathcal{H} = (\mathbb{C}^2)^{\otimes n}$ of dimension $\text{dim}(\rho(\mathbf{x})) = 2^n$. We will explicitly consider the scenario where $n = \Theta(d)$, which is a common setting in most existing VQC algorithms and hence the number of qubits in a given algorithm will be of the same order as the input vector dimension $d$. The state $\rho(\mathbf{x})$ is then passed through a variational circuit ansatz $U(\boldsymbol\theta)$ defined as
\begin{equation}
    \mathbf{U}(\boldsymbol{\theta}) = \prod_{k=1}^{D} e^{- i \theta_{k}\mathbf{H}_{k}},
    \label{eq:ansatz}
\end{equation}
which is parameterized by a vector of variational parameters $\boldsymbol{\theta} = [\theta_{1},\cdots, \theta_{D}]$, where $D$ is the total number of variational parameters. Here $\{\mathbf{H}_1,\cdots, \mathbf{H}_{D}\}$ are the set of $D$ Hermitian generators of the circuit $U$. We note that the above structure is quite general since some common ansatz structures such as the hardware efficient ansatz, the quantum alternating operator ansatz, and Hamiltonian variational ansatz among others, are all encapsulated in this framework as highlighted in \cite{larocca2022diagnosing}.

The parameterized state $\rho(\mathbf{x})$ is passed through a variational circuit denoted by $U(\boldsymbol{\theta})$, followed by the measurement of some observable $\mathbf{O} \in \mathcal{H}$. For a given $\boldsymbol{\theta}$, the output of the variational quantum circuit model is expressed as the expectation value of $\mathbf{O}$ with the parameterized state,
\begin{equation}
    y_{\boldsymbol\theta}(\mathbf{x}) = \text{Tr}(\mathbf{U}^\dagger(\boldsymbol\theta) \boldsymbol{O} \mathbf{U}(\boldsymbol\theta) \rho(\mathbf{x})).
    \label{Eq:model_output}
\end{equation}
For the task of optimizing the variational quantum circuits, the model output is fed into the desired cost function $\texttt{Cost}(\boldsymbol{\theta}, \mathbf{x})$ which is subsequently minimized to obtain,
\begin{equation}
        \boldsymbol \theta ^* = \underset{\boldsymbol \theta}{\text{arg min}} \hspace{1mm} \texttt{Cost}(\boldsymbol\theta, \mathbf{x}),
\end{equation}
where $\boldsymbol{\theta}^*$ are the final parameter values after optimization. Typical examples of cost functions include cross-entropy loss, and mean-squared error loss, among others \cite{wang2020comprehensive}. 

The typical optimization procedure involves computing the gradient of the cost function with respect to the parameters $\boldsymbol{\theta}$, which in turn, involves computing the gradient with respect to the model output $y_{\boldsymbol\theta}(\mathbf{x})$
\begin{equation}
    C_j = \frac{\partial y_{\boldsymbol\theta}(\mathbf{x})}{\partial \theta_j}, \hspace{2mm} j \in [D].
    \label{eq:gradient}
\end{equation}
Going forward, we will directly deal with the recoverability of input $\mathbf{x}$ given $C_j$, instead of working with specific cost functions. Details of how to reconstruct our results when considering gradients with respect to specific cost functions are covered in Appendix~\ref{app:loss-function}.

\subsection{Lie Theoretic Framework}

We review some introductory as well as recent results on Lie theoretic framework for variational quantum circuits which are relevant to our work. For a more detailed review of this topic, we refer the reader to \cite{larocca2022diagnosing, ragone2022representation}. We provide the Lie theoretic definitions for a periodic ansatz of the form Eq~\ref{eq:ansatz}.

\begin{definition}[Dynamical Lie Algebra]
    The dynamical Lie algebra (DLA) $\g$ for an ansatz $\mathbf{U}(\boldsymbol{\theta})$ of the form Eq~\ref{eq:ansatz} is defined as the real span of the Lie closure of the generators of $U$
    \begin{equation}
        \g = \text{span}_{\mathbb{R}}\langle i\mathbf{H}_1, \cdots, i\mathbf{H}_D \rangle_{\text{Lie}},
    \end{equation}
    where the closure is defined under taking all possible nested commutators of $S = \{i\mathbf{H}_1,\cdots, i\mathbf{H}_D\}$. In other words, it is the set of elements obtained by taking the commutation between elements of $S$ until no further linearly independent elements are obtained. 
\end{definition}

\begin{definition}[Dynamical Lie Group]
    The dynamical Lie group $\mathcal{G}$ for an ansatz $\mathbf{U}(\boldsymbol{\theta})$ of the form of Eq~\ref{eq:ansatz} is determined by the DLA $\g$ such that,
    \begin{equation}
        \mathcal{G} = e^{\g},
    \end{equation}
    where $e^{\g} := \{e^{i\mathbf{H}}, \hspace{2mm} i\mathbf{H} \in \g\}$ and is a subgroup of $\textup{SU}(2^n)$. For generators in $\g$, the set of all $\mathbf{U}(\boldsymbol{\theta})$ of the form Eq~\ref{eq:ansatz} generates a dense subgroup of $\mathcal{G}$.
\end{definition}

\begin{definition}[Adjoint representation]
The Lie algebra adjoint representation is the following linear action:
 $\forall \mathbf{K}, \mathbf{H} \in \g$,
\begin{align}
\textup{ad}_{\mathbf{H}}\mathbf{K} := [\mathbf{H}, \mathbf{K}] \in \g,
\end{align}
and the Lie group adjoint representation is the following linear action
 $\forall \mathbf{U} \in \mathcal{G}, \forall \mathbf{H} \in \g$,
\begin{align}
\textup{Ad}_{\mathbf{U}}\mathbf{H} := \mathbf{U}^{\dagger}\mathbf{H}\mathbf{U} \in \g.
\end{align}
    
\end{definition}

\begin{definition}[DLA basis]
    The basis of the DLA is denoted as $\{i\mathbf{B}_{\alpha}\}_\alpha$, $\alpha \in \{1,\cdots,\textup{dim}(\g)\}$, where $\mathbf{B}_\alpha$ are Hermitian operators and form an orthonormal basis of $\g$ with respect to the Frobenius inner product. 
\end{definition}

Any observable $\mathbf{O}$ is said to be entirely supported by the DLA whenever $i\mathbf{O} \in \g$, or in other words
\begin{equation}
    \mathbf{O} = \sum_\alpha \mu_\alpha \mathbf{B}_\alpha,
\end{equation}
where $\mu_\alpha$ is coefficient of support of $\mathbf{O}$ in the basis $\mathbf{B}_\alpha$.

\begin{definition}[Lie Algebra Supported Ansatz \cite{fontana2023adjoint}] \label{def:lasa}
    A Lie Algebra Supported Ansatz (LASA) is a periodic ansatz of the form Eq~\ref{eq:ansatz} of a VQC where the measurement operator $\boldsymbol{O}$ is completely supported by the DLA $\g$ associated with the generators of $U(\boldsymbol{\theta})$, that is,
    \begin{equation}
        i\boldsymbol{O} \in \g.
    \end{equation}
\end{definition}
In addition to its connections to the trainability of a VQC, this condition also implies that $\forall \boldsymbol{\theta}, U^{\dagger}(\boldsymbol{\theta})i\boldsymbol{O}U(\boldsymbol{\theta}) \in \g$, which enables us to express the evolution of the observable $\boldsymbol{O}$ in terms of elements of $\g$. This is key to some simulation algorithms that are possible for polynomial-sized DLAs \cite{somma2004nature, goh2023lie}.

\subsection{Input Recoverability Definitions} \label{sec:input_recover}

In this section, we provide meaningful definitions of what it means to recover the classical input data given access to the gradients $\{C_j\}_{j=1}^{D}$ of a VQC. Notably, our definitions are motivated in a manner that allows us to consider the encoding and variational portions of a quantum variational model separately.

A useful concept in machine learning is the creation of data \emph{snapshots}. These snapshots are compact and efficient representations of the input data's feature map encoding. Essentially, a snapshot retains enough information to substitute for the full feature map encoded data, enabling the training of a machine learning model for a distinct task with the same data but without the need to explicitly know the input data was passed through the feature map. For example, in methods such as $\mathfrak{g}$-sim \cite{goh2023lie}, these snapshots are used as input vectors for classical simulators. The simulator can then process these vectors efficiently under certain conditions, recreating the operation of a variational quantum circuit.

It will become useful to classify the process of input data $\mathbf{x}$ recovery into two stages; the first concerns recovering snapshots of the quantum state $\rho(\mathbf{x})$ (Eq~\ref{eq:quantum-encoding-def}) from the gradients, which involves only considering the variational part of the circuit.

\begin{definition}[Snapshot Recovery] \label{def:snap_rec}
    Given the gradients $C_j, \hspace{2mm} j \in [D]$ as defined in Eq~\ref{eq:gradient} as well as the parameters $\boldsymbol{\theta} = [\theta_1, \cdots, \theta_{D}]$, we consider a VQC to be snapshot recoverable if there exists an efficient $\mathcal{O}(poly(d, \frac{1}{\epsilon}))$ classical polynomial time algorithm to recover the vector $\mathbf{e}_{\text{snap}}$ such that,
    \begin{equation}
        |[\mathbf{e}_{\text{snap}}]_\alpha - \textup{Tr}(\mathbf{B}_\alpha \rho(\mathbf{x}))| \leq \epsilon, \hspace{1mm} \forall \alpha \in [\dim(\g)] ,
    \end{equation}
    for some $\{\mathbf{B}_\alpha\}$ forming an  Frobenius-orthonormal basis of the DLA $\g$ corresponding to $U(\boldsymbol{\theta})$ in Eq~\ref{eq:ansatz}, and the above holds for any  $\epsilon > 0$. We call $\mathbf{e}_{\text{snap}}$  the snapshot of $\mathbf{x}$. 
\end{definition} 
    
In other words, $ \mathbf{e}_{\text{snap}}$ is the orthogonal projection of the input state $\rho(\mathbf{x})$ onto the DLA of the ansatz, and thus the elements of $ \mathbf{e}_{\text{snap}}$   are the only components of the input state that contribute to the generation of the model output $y_{\boldsymbol{\theta}}(\mathbf{x})$ as defined in Eq~\ref{Eq:model_output}. Here, we constitute the retrieval of the snapshot $ \mathbf{e}_{\text{snap}}$  of a quantum state $\rho(\mathbf{x})$ as \emph{weak privacy} breach, since the snapshot could be used to train the VQC model for other learning tasks involving the same data $\{\mathbf{x}\}$ but without the need to use the actual data. As an example, consider an adversary that has access to the snapshots corresponding to the data of certain customers. Their task is to train the VQC to learn the distinct behavioral patterns of the customers. It becomes apparent that the adversary can easily carry out this task without ever needing the original data input since the entire contribution of the input $\mathbf{x}$ in the VQC output decision-making $y_{\boldsymbol{\theta}}(\mathbf{x})$ is captured by $ \mathbf{e}_{\text{snap}}$.

Next, we consider the stronger notion of privacy breach in which the input data $\mathbf{x}$ must be fully reconstructed. Assuming that the snapshot has been recovered, the second step we therefore consider is inverting the recovered snapshot $ \mathbf{e}_{\text{snap}}$ to find the original data $\mathbf{x}$, a process that is primarily dependent on the encoding part of the circuit. Within our snapshot inversion definition, we consider two cases that enable different solution strategies, snapshot inversion utilizing purely classical methods, and snapshot inversion methods that can utilize quantum samples.

\begin{definition}[Classically Snapshot Invertible Model]
    Given the snapshot $\mathbf{e}_{\text{snap}}$ as the expectation values of the input state $\rho(\mathbf{x})$, we say that VQC admits classical snapshot invertibility, if there exists an efficient $\mathcal{O}(\text{poly}(d, \frac{1}{\epsilon}))$ polynomial time classical randomized algorithm to recover
    \begin{equation}
        \mathbf{x'} : \| \mathbf{x}' - \mathbf{x}\|_2 \leq \epsilon,
    \end{equation}
    with probability at least $p = \frac{2}{3}$, for any user defined $\epsilon >0$.
\end{definition}

\begin{definition}[Quantum Assisted Snapshot inversion]
    Given the snapshot $\mathbf{e}_{\text{snap}}$ as the expectation values of the input state $\rho(\mathbf{x})$, and the ability to query $\text{poly}(d, \frac{1}{\epsilon})$ number of samples from the encoding circuit $V$ to generate snapshots $\mathbf{e}_{\text{snap}}'$ for any given input $\mathbf{x}'$, we say that VQC admits quantum-assisted snapshot invertibility, if there exists an efficient $\mathcal{O}(\text{poly}(d, \frac{1}{\epsilon}))$ polynomial time classical randomised algorithm to recover
    \begin{equation}
        \mathbf{x'} : \| \mathbf{x}' - \mathbf{x}\|_2 \leq \epsilon,
    \end{equation}
    with probability at least $p = \frac{2}{3}$, for any user defined $\epsilon >0$.
\end{definition}

In this work, we specifically focus on input recoverability by considering the conditions under which VQC would admit snapshot recovery followed by snapshot invertibility.  Considering these two steps individually allows us to delineate the exact mechanisms that contribute to the overall recovery of the input.

It is important to mention that it may potentially only be possible to recover the inputs of a VQC up to some periodicity, such that there only exists a classical polynomial time algorithm to recover $\tilde{\mathbf{x}} = \mathbf{x} + \mathbf{k}\pi$ up to $\epsilon$-closeness, where $\mathbf{k} \in \mathbb{Z}$. As the encodings generated by quantum feature maps inherently contain trigonometric terms, in the most general case it may therefore only be possible to recover $\mathbf{x}$ up to some periodicity. However, this can be relaxed if the quantum feature map is assumed to be injective.

Figure~\ref{fig:g-sim-explanation}. shows a diagram that highlights the Lie algebraic simulation method \cite{goh2023lie} along with specifications of the input recovery framework as defined in this work.

\begin{figure}[t!]
    \centering
    \includegraphics[width=0.95\textwidth]{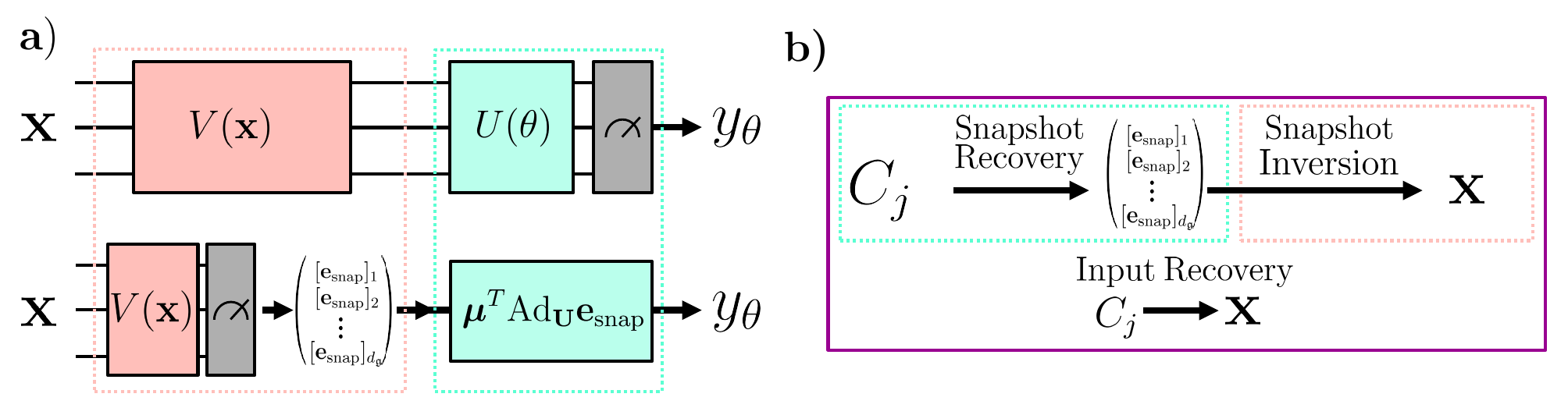}
    \caption{a) Visualization of the difference between the circuit implementation of a variational quantum model and a Lie algebraic simulation procedure of the same model \cite{goh2023lie}. In the upper circuit, a VQC works by encoding the input data $\mathbf{x}$ into a quantum circuit using the encoding step $V(\mathbf{x})$, which is then passed through a variational circuit $U(\theta)$. After this some measurement $O$ is taken of the quantum state in order to calculate the model output $y_\theta$. In the lower circuit, the Lie Algebraic Simulation framework \cite{goh2023lie} is shown; similarly, input data $\mathbf{x}$ is encoded into a quantum circuit using the encoding step $V(\mathbf{x})$, however, the measurements are then performed on this encoded state and used to form a vector of snapshot expectation values. This vector of snapshot expectation values can then be passed as inputs to a classical simulator that uses the adjoint form of $U(\theta)$, which can be performed with resources scaling with the dimension of the DLA formed by the generators of $U(\theta)$. b) In this work, we assess the ability to recover an input $\mathbf{x}$ from gradients $C_j$. This can be broken into two parts: Firstly, the snapshot $\mathbf{e}_{\text{snap}}$ must be recovered from the gradients $C_j$, which corresponds to reversing the Lie Algebraic simulation step. Secondly, the recovered snapshot $\mathbf{e}_{\text{snap}}$ must be inverted to find the original data $\mathbf{x}$, which requires finding the values of $\mathbf{x}$ that when input into $V(\mathbf{x})$ will give the same snapshot values $\mathbf{e}_{\text{snap}}$. If both snapshot recovery and snapshot inversion can be performed, then it admits efficient input recovery.}
    \label{fig:g-sim-explanation}
\end{figure}

\section{Snapshot Recovery} \label{sec:sanpshot_recovery}

This section addresses the \emph{weak privacy} notion of recovering the snapshots of the input as introduced in Def~\ref{def:snap_rec}. As the name implies, the goal here is to recover the vector $\mathbf{e}_{\text{snap}}$ for some Schmidt orthonormal basis $\{\mathbf{B}_{\alpha}\}_{\alpha \in \text{dim}(\g)}$ of the DLA corresponding to the VQC ansatz $\mathbf{U}(\boldsymbol{\theta})$, given that the attacker is provided the following information,

\begin{enumerate}
    \item $D$ gradient information updates $C_j = \frac{\partial y_{\boldsymbol\theta}(\mathbf{x})}{\partial \theta_j}, j\in [D]$ as defined in Eq~\ref{eq:gradient}.
    \item Ansatz architecture $\mathbf{U}(\boldsymbol\theta)$ presented as an ordered sequence of Hermitian generators  $\{\theta_k, \mathbf{H}_{k}\}_{k=1}^{D}$, where $\mathbf{H}_k$ is expressed as a polynomial (in the number of qubits) linear combination of Pauli strings.
    \item Measurement operator $\mathbf{O}$ which satisfies the LASA condition according to Def~\ref{def:lasa} and expressed as a polynomial (in the number of qubits) linear combination of Pauli strings 
\end{enumerate}

Recovering these snapshots will enable an attacker to train the VQC model for other learning tasks that effectively extract the same information from the input states $\rho(\mathbf{x})$ but without the need to use the actual data. The main component of the snapshot recoverability algorithm makes use of the $\g$-sim \cite{somma2004nature, goh2023lie} framework, which we briefly review in the following subsection while also clarifying some previously implicit assumptions, to construct a system of linear equations that can be solved to recover $\mathbf{e}_{\text{snap}}$ as detailed in Algorithm~\ref{alg:snapshot_recov}.

\subsection{Review of Lie-Algebraic Simulation Framework}

We start by reviewing the $\g$-sim framework \cite{somma2004nature,goh2023lie} for classically computing the cost function and gradients of VQCs, when the observable lies in the DLA of the chosen ansatz. Specifically, this framework evolves the expectation values of observables via the adjoint representation. However, a necessary condition for this procedure to be efficient is that the dimension of the DLA ($\text{dim}(\g)$) is only polynomially growing in the number of qubits.

The first step of $\g$-sim consists of building an orthonormal basis for the DLA $\g$ given $(\{\theta_k, \mathbf{H}_k\})_{k=1}^{D}$. Algorithm \ref{alg:dla_basis} presents a well-known procedure to do this. The procedure simply computes pairwise commutators until no new linearly independent elements are found. Given that all operators are expressed in the Pauli basis, the required orthogonal projectors and norm computations performed by Algorithm \ref{alg:dla_basis} can be performed efficiently. If the dimension of DLA is $\mathcal{O}(\text{poly}(n))$, then the iteration complexity, i.e., the number of sets of commutators that we compute, of this procedure is polynomial in $n$. However, an important caveat is that potentially the elements forming our estimation for the DLA basis could have exponential support on the Pauli basis, which is a result of computing new pairwise commutators at each iteration. Thus, for this overall procedure to be efficient, we effectively require that the nested commutators of the generators $\mathbf{H}_k$ do not have exponential support on the Pauli basis.

\begin{definition}[Slow Pauli Expansion] \label{def:slow_pe}
    A set of Hermitian generators $\{\mathbf{H}_1, \dots, \mathbf{H}_D\}$  on $n$-qubits expressed as linear combinations of $\mathcal{O}(\text{poly}(\dim(\g)))$ Pauli strings satisfies the slow Pauli expansion condition if $\forall r \in [\dim \g]$, $[\mathbf{H}_{r}, [\cdots, [\mathbf{H}_{2}, \mathbf{H}_{1}]]]$ can be expressed as a linear combination of $\mathcal{O}(\text{poly}(\dim(\g)))$ Pauli strings.
\end{definition}

In general, it is unclear how strong of an assumption this is, which means that the attacks that we present may not be practical for all VQCs that satisfy the polynomial DLA condition, and thus privacy preservation may still be possible. Also, it does not seem to be possible to apply the $\g$-sim framework without the slow Pauli expansion condition. Lastly, a trivial example of a set of Hermitian generators that satisfies the slow Pauli expansion are those for the quantum compound ansatz discussed in \cite{fontana2023adjoint}.

\begin{algorithm}
\caption{: Finding DLA basis}
\label{alg:dla_basis}
\begin{algorithmic}
\Require Hermitian circuit generators $\{H_1, \dots, H_D\}$, all elements are linear combinations of polynomially-many Pauli strings
\Ensure $\mathcal{A}''' = \{\mathbf{B}_1, \dots, \mathbf{B}_{{\text{dim}(\mathfrak{g})} }\}$ as the basis for the DLA $\g$ %

\begin{enumerate}
    \item Let $\mathcal{A} = \{H_1, \dots, H_D\}$, with all elements represented in the Pauli basis.
    \item Repeat until breaks
        \begin{enumerate}
            \item Compute pairwise commutators of elements of $\mathcal{A}$ into $\mathcal{A}'$
            \item Orthogonally project $\mathcal{A}'$ onto orthogonal complement of $\mathcal{A}$ in $\g$
            \item Set new $\mathcal{A}''$ to be $\mathcal{A}$ plus new orthogonal elements.
            If no new elements, break.
        \end{enumerate}
    \item Perform Gram--Schmidt on $\mathcal{A}$ forming $\mathcal{A}'''$.
    \item Return $\mathcal{A}'''$.
\end{enumerate}
\end{algorithmic}
\end{algorithm}

Given the orthonormal basis $\mathbf{B}_{\alpha}$ for $\g$, under the LASA condition, we can express $\mathbf{O}=\sum_{\alpha \in [\text{dim}(\g)]} \mu_\alpha \mathbf{B}_\alpha$, and hence we can write the output as
\begin{equation}
    y_{\boldsymbol\theta}(\mathbf{x}) = \text{Tr}(\mathbf{U}^\dagger(\boldsymbol\theta) \mathbf{O} \mathbf{U}(\boldsymbol\theta) \rho(\mathbf{x})) = \sum_\alpha \text{Tr}(\mu_\alpha \mathbf{U}^\dagger \mathbf{B}_\alpha \mathbf{U} \rho(\mathbf{x}) ) = \sum_\alpha \text{Tr}(\mu_\alpha \text{Ad}_{\mathbf{U}}( \mathbf{B}_\alpha)\rho(\mathbf{x}) ). 
\end{equation}

In addition, given the form of $\mathbf{U}$, we can express $\text{Ad}_{\mathbf{U}}$ as,
\begin{align}
    \text{Ad}_{\mathbf{U}} = \prod_{k=1}^{D} e^{-\theta_{k}\text{ad}_{i\mathbf{H}_{k}}}.
\end{align}
We can also compute the structure constants for our basis $\mathbf{B}_{\alpha}$, which is the collection of $\dim(\g) \times \dim(\g)$ matrices for the operators $\text{ad}_{i\mathbf{B}_{\alpha}}$. As a result of linearity, we also have the matrix for each $\text{ad}_{i\mathbf{H}}$ for $\mathbf{H} \in \g$ in the basis $\mathbf{B}_{\alpha}$. Then by performing matrix exponentiation and multiplying $\dim(\g) \times \dim(\g)$ we can compute the matrix for $\text{Ad}_{\mathbf{U}}$.

Using the above, the model output may be written,
\begin{align}
&y_\theta = \sum_{\alpha, \beta}\mu_\alpha [\text{Ad}_{\mathbf{U}}]_{\alpha \beta} \text{Tr}( \mathbf{B}_\beta  \rho(\mathbf{x})) = \boldsymbol{\mu}^T\text{Ad}_{\mathbf{U}} \mathbf{e}_{\text{snap}},
\end{align}
where $\mathbf{e}_{\text{snap}}$ is a vector of expectation values of the initial state, i.e. $[\mathbf{e}_{\text{snap}}]_\beta = \text{Tr}[ \mathbf{B}_\beta  \rho(\mathbf{x})].$

Similar to the cost function, the circuit gradient can also be computed via $\g$-sim. Let,
\begin{equation}
    C_j = \frac{\partial y_\theta}{\partial \theta_j} = \boldsymbol{\mu}^T \frac{\partial \text{Ad}_{\mathbf{U}} }{\partial \theta_j} \mathbf{e}_{\text{snap}} =: \chi^{(j)} \cdot  \mathbf{e}_{\text{snap}},
\end{equation}
where the adjoint term differentiated with respect to $\theta_j$ can be written as,
\begin{align}
\frac{\partial \text{Ad}_{\mathbf{U}} }{\partial \theta_{j}} = 
\left[\prod_{k=j}^{D} e^{\theta_{k}\text{ad}_{i\mathbf{H}_{k}}}\right] \text{ad}_{i\mathbf{H}_j} \left[\prod_{k=1}^{j}e^{\theta_{k}\text{ad}_{i\mathbf{H}_k}}\right]. 
\end{align}
The components of $\chi^{(j)}$ can be expressed as,
\begin{equation}
    \chi_\beta^{(j)} = \sum_\alpha \mu_\alpha \left[\frac{\partial \text{Ad}_{\mathbf{U}} }{\partial \theta_j}\right]_{\alpha, \beta},
\end{equation}
allowing $C_j$ terms to be represented in a simplified manner as
\begin{equation}
\label{eqn:grad-in-exp-snap}
    C_j = \sum_{\beta=1}^{{\text{dim}(\g)}} \chi_\beta^{(j)}[ \mathbf{e}_{\text{snap}}]_\beta. 
\end{equation}

 The key feature of this setup is that the matrices and vectors involved have dimension $\textup{dim}(\mathfrak{g})$, therefore for a polynomial-sized DLA, the simulation time will scale polynomially and model outputs can be calculated in polynomial time \cite{goh2023lie}. Specifically, the matrices for each $\text{ad}_{i\mathbf{H}_k}$ in the basis $\{\mathbf{B}_l\}$ and $\text{Ad}_{\mathbf{U}}$ are polynomial in this case.

This Lie-algebraic simulation technique was introduced in order to show efficient methods of simulating LASA circuits with polynomially sized DLA. In this work, we utilize the framework in order to investigate the snapshot recovery of variational quantum algorithms. Based on the above discussion, the proof of the following theorem is self-evident.

\begin{theorem}[Complexity of $\g$-sim]
If ansatz family $\mathbf{U}(\boldsymbol{\theta})$ with an observable $\mathbf{O}$ satisfies both the LASA condition and Slow Pauli Expansion, then the cost function and its gradients can be simulated with complexity $\mathcal{O}(\text{poly}(\dim (\g)))$ using a procedure that at most queries a quantum device a polynomial number of times to compute the $\dim(\g)$-dimensional snapshot vector $\mathbf{e}_{\text{snap}}$. 
\end{theorem}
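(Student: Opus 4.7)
The plan is to verify the complexity of each stage described in the preceding subsection, observing that every arithmetic operation on $\dim(\g) \times \dim(\g)$ matrices costs $\mathcal{O}(\text{poly}(\dim(\g)))$ while every operation performed in the Pauli basis is controlled by Slow Pauli Expansion. The result is essentially a bookkeeping argument once Algorithm~\ref{alg:dla_basis} has been shown to terminate in polynomial time, so my proposal is to walk through the pipeline and attach a polynomial cost to each step.

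First, I would run Algorithm~\ref{alg:dla_basis} to build the orthonormal basis $\{\mathbf{B}_\alpha\}_{\alpha=1}^{\dim(\g)}$. Each iteration takes pairwise commutators, projects onto an orthogonal complement, and runs a Gram--Schmidt step; under Slow Pauli Expansion every nested commutator of the $\mathbf{H}_k$ is a sum of at most $\text{poly}(\dim(\g))$ Pauli strings, so Frobenius inner products (which reduce to counting shared Pauli strings) and norm computations are each poly-time. The loop adds at least one new basis element per successful round and terminates in at most $\dim(\g)$ rounds. Next, I would invoke LASA to decompose $\mathbf{O} = \sum_\alpha \mu_\alpha \mathbf{B}_\alpha$ via $\mu_\alpha = \text{Tr}(\mathbf{O}\mathbf{B}_\alpha)$, which is poly-time for the same Pauli-support reason, and tabulate the structure constants of $\g$ in $\{\mathbf{B}_\alpha\}$ to obtain the $\dim(\g) \times \dim(\g)$ matrices $\text{ad}_{i\mathbf{H}_k}$. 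Matrix exponentiation and successive multiplication then yield $\text{Ad}_{\mathbf{U}}$ in $\mathcal{O}(D \cdot \dim(\g)^3)$ time; gradients $\partial \text{Ad}_{\mathbf{U}}/\partial \theta_j$ are produced by inserting $\text{ad}_{i\mathbf{H}_j}$ at position $j$ of the product at the same asymptotic cost per gradient coordinate.

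The quantum device is queried only to estimate the snapshot coordinates $[\mathbf{e}_{\text{snap}}]_\alpha = \text{Tr}(\mathbf{B}_\alpha \rho(\mathbf{x}))$; since each $\mathbf{B}_\alpha$ expands into polynomially many Pauli strings, standard Pauli-measurement schemes produce each coordinate to additive precision $\epsilon$ using $\text{poly}(n,1/\epsilon)$ samples, so $\dim(\g) \cdot \text{poly}(n,1/\epsilon)$ total queries suffice. Assembling $y_{\boldsymbol\theta}(\mathbf{x}) = \boldsymbol\mu^T \text{Ad}_{\mathbf{U}} \mathbf{e}_{\text{snap}}$ and $C_j = \boldsymbol\mu^T (\partial \text{Ad}_{\mathbf{U}}/\partial \theta_j)\mathbf{e}_{\text{snap}}$ is then a sequence of $\dim(\g)$-dimensional matrix-vector products, all polynomial in $\dim(\g)$.

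The one subtlety I would flag is confirming that Slow Pauli Expansion, stated only for the nested commutators of the raw generators, actually propagates through the Gram--Schmidt step of Algorithm~\ref{alg:dla_basis} so that every $\mathbf{B}_\alpha$ itself admits a $\text{poly}(\dim(\g))$ Pauli expansion. This is the only place where the argument could fail silently, but it is immediate: each $\mathbf{B}_\alpha$ is an $\mathbb{R}$-linear combination of at most $\dim(\g)$ such nested commutators, and the union of their Pauli supports is still polynomially sized. With this closure property in hand, every stage of the pipeline is manifestly polynomial in $\dim(\g)$, and the theorem follows.
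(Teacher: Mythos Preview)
Your proposal is correct and follows the same route as the paper, which in fact declares the theorem ``self-evident'' from the preceding discussion of the $\g$-sim pipeline and offers no further proof. Your write-up is simply a more explicit accounting of that same pipeline, and your closing remark on Slow Pauli Expansion propagating through Gram--Schmidt is a helpful clarification the paper leaves implicit.
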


\subsection{Snapshot Recovery Algorithm}

\begin{algorithm}
\begin{algorithmic}
\Require Observable $\mathbf{O} $ such that $i\mathbf{O} \in \g$, generators $\{\mathbf{H}_k\}_{k=1}^{D}$, ordered sequence $(\{\theta_k, \mathbf{H}_{k}\})_{k=1}^{D}$, and gradients $C_j = \frac{\partial y_{\boldsymbol\theta}(\mathbf{x})}{\partial \theta_j}, j\in [D]$ for some unknown classical input $\mathbf{x}$.
\Ensure Snapshot $\mathbf{e}_{\text{snap}}$ for $\mathbf{x}$

\begin{enumerate}
    \item Run Algorithm \ref{alg:dla_basis} to obtain an orthonormal basis for the DLA $\{\mathbf{B}_\beta\}_{\beta \in [\text{dim}(\g)]}$
    \item For $\beta \in [\text{dim}(\g)]$, compute the $\text{dim}(\g) \times \text{dim}(\g)$ matrix $\text{ad}_{i\mathbf{B}_\beta}$
    \item For $k \in [D]$, compute the coefficients of $\mathbf{H}_k$ in the basis $\{\mathbf{B}_\beta\}_{\beta \in [\text{dim}(\g)]}$, which gives us $\text{ad}_{i\mathbf{H}_k}$
    \item For $k \in [D]$, compute the  $\text{dim}(\g) \times \text{dim}(\g)$ matrix exponential $e^{\theta_{k}\text{ad}{i\mathbf{H}_k}}$
    \item For $j \in [D]$ compute the $\text{dim}(\g) \times \text{dim}(\g)$ matrix
    \begin{align}
        \frac{\partial \text{Ad}_{\mathbf{U}} }{\partial \theta_{j}} = 
\left[\prod_{k=j}^{D} e^{\theta_{k}\text{ad}_{i\mathbf{H}_{k}}}\right] \text{ad}_{i\mathbf{H}_j} \left[\prod_{k=1}^{j}e^{\theta_{k}\text{ad}_{i\mathbf{H}_k}}\right] .
    \end{align}
    \item For $\beta \in [\text{dim}(\g)]$, compute the coefficients $\mu_\beta$ of $\mathbf{O}$ in the basis $\{\mathbf{B}_\beta\}_{\beta \in [\text{dim}(\g)]}$
    \item For $j \in [D], \beta \in [\text{dim}(\g)]$, compute 
    \begin{align}
    \chi_\beta^{(j)} = \sum_\alpha \mu_\alpha \left[\frac{\partial \text{Ad}_{\mathbf{U}} }{\partial \theta_j}\right]_{\alpha, \beta},
    \end{align}
    and construct $D \times \text{dim}(\g)$ matrix $\mathbf{A}$ with $\mathbf{A}_{rs} = \chi_s^{(r)}$.
    \item Solve the following linear system,
    \begin{align}
        [C_1, \dots, C_{D}]^{\mathsf{T}} = \mathbf{A}\mathbf{y},
    \end{align}
    and return $\mathbf{y}$ as the snapshot $\mathbf{e}_{\text{snap}}$.
\end{enumerate}
\end{algorithmic}
\caption{: Snapshot Recovery}\label{alg:snapshot_recov}
\end{algorithm}

With the framework for the $\mathfrak{g}$-sim \cite{goh2023lie} established, we focus on how snapshots $\mathbf{e}_{\text{snap}}$ of the input data can be recovered using the VQC model gradients $C_j$, with the process detailed in Algorithm~\ref{alg:snapshot_recov}. In particular, the form of Eq~\ref{eqn:grad-in-exp-snap} allows a set-up leading to the recovery the snapshot vector $\mathbf{e}_{\text{snap}}$ from the gradients $\{C_j\}_{j=1}^D$, but requires the ability to solve the system of $D$ linear equations given by $\{C_j\}$ with $\text{dim}(\g)$ unknowns $[\mathbf{e}_{\text{snap}}]_{\beta\in \text{dim}(\g)}$. The following theorem formalizes the complexity of recovering the snapshots from the gradients.   

\begin{theorem}[Snapshot Recovery]\label{thm:snaprecoveryscaling}
    Given the requirements specified in Algorithm~\ref{alg:snapshot_recov}, along with the assumption that the number of variational parameters $D \geq \text{dim}(\g)$, where ${\text{dim}(\g)}$ is the dimension of the DLA $\g$, the VQC model admits snapshot $\mathbf{e}_{\text{snap}}$ recovery with complexity scaling as $\mathcal{O}(\text{poly}(\dim (\g)))$. 
\end{theorem}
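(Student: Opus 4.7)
The plan is to verify both the correctness and polynomial complexity of Algorithm~\ref{alg:snapshot_recov} step by step, leveraging the $\g$-sim machinery just established. Because LASA holds and (by the preceding $\g$-sim Theorem) we implicitly assume slow Pauli expansion together with $\dim(\g)=\mathcal{O}(\text{poly}(n))$, Eq~\ref{eqn:grad-in-exp-snap} tells us that the true snapshot generated by the unknown input $\mathbf{x}$ exactly satisfies $C_j = \sum_{\beta} \chi_\beta^{(j)}\,[\mathbf{e}_{\text{snap}}]_\beta$. The proof therefore reduces to two tasks: (i) show that all of the coefficients $\chi_\beta^{(j)}$ and hence the matrix $\mathbf{A}$ can be assembled classically in time $\mathcal{O}(\text{poly}(\dim(\g)))$, and (ii) show that solving $\mathbf{A}\mathbf{y}=[C_1,\dots,C_D]^{\mathsf{T}}$ uniquely pins down $\mathbf{y}=\mathbf{e}_{\text{snap}}$ in the same polynomial complexity.

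For (i), I would walk through Steps 1--7 in order. Step~1 invokes Algorithm~\ref{alg:dla_basis}, which under the slow Pauli expansion condition terminates in polynomially-many iterations with each intermediate element supported on polynomially-many Pauli strings, so orthogonal projection and Gram--Schmidt remain efficient. Step~2 assembles each structure-constant matrix $\text{ad}_{i\mathbf{B}_\beta}$ via $O(\dim(\g)^2)$ commutators expanded in the DLA basis. Step~3 reduces to Frobenius inner products against the basis $\{\mathbf{B}_\beta\}$, which is polynomial since each $\mathbf{H}_k$ has polynomial Pauli support. Steps~4 and~5 are matrix exponentiation and $O(D)$ successive products on $\dim(\g)\times\dim(\g)$ arrays; combined with Step~6, which is the LASA expansion of $\mathbf{O}$ in the $\{\mathbf{B}_\beta\}$ basis, this yields each $\chi^{(j)}_\beta$ in Step~7 at total cost $\mathcal{O}(\text{poly}(D,\dim(\g)))$. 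For standard VQC architectures with $D=\mathcal{O}(\text{poly}(n))$, the full runtime is polynomial in $\dim(\g)$.

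For (ii), the system is consistent by construction because the true $\mathbf{e}_{\text{snap}}$ satisfies it exactly, so the only question is uniqueness. The hypothesis $D\geq\dim(\g)$ is exactly what makes an injective solution possible; given full column rank of $\mathbf{A}$, the pseudoinverse recovers $\mathbf{e}_{\text{snap}}$ in $\text{poly}(D,\dim(\g))$ time and to arbitrary precision. The main obstacle is justifying that $\mathbf{A}$ really has column rank $\dim(\g)$: nothing in $D\geq\dim(\g)$ alone enforces this, and the rows $\chi^{(j)}$ are coordinates of $\mathrm{Ad}^{\dagger}_{\mathbf{U}(\boldsymbol{\theta})}\,\mathbf{O}$ acted on by $\mathrm{ad}_{i\mathbf{H}_j}$, which collapse in degenerate cases. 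I would handle this as a genericity statement: whenever $\{i\mathbf{H}_j\}$ actually generate $\g$ and $\boldsymbol{\theta}$ avoids a measure-zero algebraic subvariety of resonances, the vectors $\{\chi^{(j)}\}_{j=1}^{D}$ span $\mathbb{R}^{\dim(\g)}$, and the pseudoinverse solve succeeds. The cleanest formulation would either add this non-degeneracy condition explicitly or note that it is the natural dual of the reachability hypothesis implicit in $\g$-sim. Everything else reduces to routine complexity accounting already carried by the $\g$-sim framework.
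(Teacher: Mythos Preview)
Your proposal is correct and follows essentially the same route as the paper: use the $\g$-sim machinery (Steps~1--7) to assemble the coefficient matrix $\mathbf{A}$ at cost $\mathcal{O}(\text{poly}(\dim(\g)))$, then solve the linear system for $\mathbf{e}_{\text{snap}}$ at the same polynomial cost. The only notable difference is that you are more careful than the paper about uniqueness: the paper simply asserts that $D\geq\dim(\g)$ yields a ``determined system'' and invokes Gaussian elimination, whereas you correctly observe that full column rank of $\mathbf{A}$ is not automatic and supply a genericity argument to fill that gap.
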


\begin{proof}
Firstly, we note that given the gradients $C_j$ and parameters $\theta_{j\in [D]}$, the only unknowns are the components of the vector $\mathbf{e}_{\text{snap}}$ of length ${\text{dim}(\mathfrak{g})}$. Therefore, it is necessary to have $\text{dim}(\mathfrak{g})$ equations in total; otherwise, the system of equations would be underdetermined and it would be impossible to find a unique solution. The number of equations depends on the number of gradients and, therefore, the number of variational parameters in the model, hence the requirement that $D \geq \text{dim}(\g)$.

Assuming now that we deal with the case that there are $D \geq {\text{dim}(\mathfrak{g})}$ variational parameters of the VQC model, we can therefore arrive at a determined system of equations. The resulting system of simultaneous equations can be written in a matrix form as,
\begin{equation}
    \begin{pmatrix}
    C_1 \\
    C_2 \\
    \vdots \\
    C_{D}
    \end{pmatrix}_{D \times 1}
    =
    \begin{pmatrix}
    \chi_1^{(1)} & \chi_2^{(1)} & \cdots & \chi_{{\text{dim}(\mathfrak{g})}}^{(1)} \\
    \chi_1^{(2)} & \chi_2^{(2)} & \cdots & \chi_{{\text{dim}(\mathfrak{g})}}^{(2)}\\
    \vdots & \vdots & \ddots & \vdots \\
    \chi^{(D)}_{{\text{dim}(\mathfrak{g})}} & \chi^{(D)}_{{\text{dim}(\mathfrak{g})}} & \cdots & \chi_{{\text{dim}(\mathfrak{g})}}^{(D)} \\
    \end{pmatrix}_{D \times \text{dim}(\g)}
    \begin{pmatrix}
    [\mathbf{e}_{\text{snap}}]_1\\
    [\mathbf{e}_{\text{snap}}]_2\\
    \vdots \\
    [\mathbf{e}_{\text{snap}}]_{{\text{dim}(\mathfrak{g})}}
    \end{pmatrix}_{\text{dim}(\g) \times 1}.
    \label{eqn:matirx-to-invert}
\end{equation}

In order to solve the system of equations highlighted in Eq~\ref{eqn:matirx-to-invert} to obtain $\mathbf{e}_{\text{snap}}$, we first need to compute the coefficients $\{\chi^{(j)}_\beta\}_{j\in [D], \beta \in [\text{dim}(\g)]}$. This can done by the $\g$-sim procedure highlighted in the previous section and in steps 1-7 in Algorithm~\ref{alg:snapshot_recov} with complexity $\mathcal{O}(\text{poly}(\text{dim}(\g)))$. The next step is to solve the system of equations, i.e., step 8 of Algorithm~\ref{alg:snapshot_recov}, which can solved using Gaussian elimination procedure incurring a complexity $\mathcal{O}(\text{dim}(\g)^3)$ \cite{grcar2011mathematicians}. Thus, the overall complexity of recovering the snapshots from the gradients is $\mathcal{O}(\text{poly}(\text{dim}(\g)))$. This completes the proof. 
\end{proof}

In the case that the dimension of DLA is exponentially large ${{\text{dim}(\mathfrak{g})}} = \mathcal{O}(\exp (n))$, then performing snapshot recovery by solving the system of equations would require an exponential number of gradients and thus an exponential number of total trainable parameters $D = \mathcal{O}(\exp (n))$. However, this would require storing an exponential amount of classical data, as even the variational parameter array $\boldsymbol \theta$ would contain $\mathcal{O}(\exp (n))$ many elements and hence this model would already breach the privacy definition which only allows for a polynomial (in $n = \Theta(d)$) time attacker. In addition, the complexity of obtaining the coefficients $\chi^{(j)}_\beta$ and subsequently solving the system of linear equations would also incur a cost exponential in $n$. Hence, for the system of simultaneous equations to be determined, it is required that ${{\text{dim}(\mathfrak{g})}} = \mathcal{O}(\text{poly} (n))$. Under the above requirement, it will also be possible to solve the system of equations in Eq~\ref{eqn:matirx-to-invert} in polynomial time and retrieve the snapshot vector $\mathbf{e}_{\text{snap}}$. Hence, a model is snapshot recoverable if the dimension of the DLA dimension scales polynomially in $d$.

\section{Snapshot Invertibility} \label{sec:invertibility}

We have shown that in the case that the DLA dimension of the VQC is polynomial in the number of qubits $n$ and the slow Pauli expansion condition (Def~\ref{def:slow_pe}) is satisfied, then it is possible to reverse engineer the snapshot vector $\mathbf{e}_{\text{snap}}$ from the gradients. As a result, this breaks the weak-privacy criterion. The next step in terms of privacy analysis is to see if a strong privacy breach can also occur. This is true when it is possible to recover the original data $\mathbf{x}$ that was used in the encoding step to generate the state $\rho(\mathbf{x})$; the expectation values of this state with respect to the DLA basis elements forms the snapshot $\mathbf{e}_{\text{snap}}$. Hence, even if the DLA is polynomial and snapshot recovery allows the discovery of $\mathbf{e}_{\text{snap}}$, there is still the possibility of achieving some input privacy if $\mathbf{e}_{\text{snap}}$ cannot be efficiently inverted to find $\mathbf{x}$. The overall privacy of the VQC model, therefore, depends on both the data encoding and the variational ansatz.

One common condition that is necessary for our approaches to snapshot inversion is the ability to compute the expectation values $\text{Tr}(\rho(\mathbf{x}') \mathbf{B}_k), \hspace{2mm} \forall k \in [\text{dim}(\g)]$ for some guess input $\mathbf{x}'$. This is the main condition that distinguishes between completely classical snapshot inversion and quantum-assisted snapshot inversion. It is well-known that computing expectation values of specific observables is a weaker condition than $\rho(\mathbf{x})$ being classically simulatable \cite{Suzuki_2022}. Hence, it may be possible to classically perform snapshot inversion even if the state $\rho(\mathbf{x})$ overall is hard to classically simulate. In the quantum-assisted case it is always possible to calculate  $\text{Tr}(\rho(\mathbf{x}' \mathbf{B}_k)$ values by taking appropriate measurements of the encoding circuit $V(\mathbf{x}')$ .

In the first subsection, we present inversion attacks that apply to commonly-used feature maps and explicitly make use of knowledge about the locality of encoding circuit. The common theme among these feature maps is that by restricting to only a subset of the inputs, it is possible to express the $\rho(\mathbf{x})$ or expectations there of in a simpler way. The second subsection focuses on arbitrary encoding schemes by viewing the problem as black-box optimization. In general, snapshot inversion can be challenging or intractable even if the snapshots can be efficiently recovered and/or the feature map can be classically simulated. Our focus will be on presenting sufficient conditions for performing snapshot inversion, which leads to suggestions for increasing privacy.

\subsection{Snapshot Inversion for Local Encodings} \label{sec:separable_enc}

For efficiency reasons, it is common to encode components of the input vector $\mathbf{x}$ in local quantum gates, typically just single-qubit rotations. The majority of the circuit complexity is usually either put into the variational part or via non-parameterized entangling gates in the feature map. In this section, we demonstrate attacks to recover components of $\mathbf{x}$, up to periodicity, given snapshot vectors when the feature map encodes each $x_j$ locally. More specifically, we put bounds on the allowed amount of interaction between qubits that are used to encode each $x_j$. In addition, we also require that the number of times the feature map can encode a single $x_j$ be sufficiently small. While the conditions will appear strict, we note that they are satisfied for some commonly used encodings, e.g., the \emph{Pauli product feature map} or \emph{Fourier tower map} \cite{kumar2023expressive} which was previously used in a VQC model that demonstrated resilience to input recovery.

For the Pauli product encoding, we show that a completely classical snapshot inversion attack is possible. An example of a Pauli product encoding is the following:
 \begin{align}
 \label{eqn:pauli_product_encoding}
    \bigotimes_{j_1}^{n}\rho_j(x_j) = \bigotimes_{j_1}^{n}R_{\mathsf{X}}(x_j)|0\rangle\langle0|R_{\mathsf{X}}(-x_j).
\end{align}
where $R_{\mathsf{X}}$ is the parameterized Pauli $\mathsf{X}$ rotation gate. The Fourier tower map is similar to Equation \eqref{eqn:pauli_product_encoding} but utilizes a parallel data reuploading scheme, i.e.
\begin{equation}
\bigotimes_{j=1}^{d}\Big( \bigotimes_{l=1}^{m} R_{\mathsf{X}}( 5^{l-1} x_j) \Big).
\end{equation}
where $n = d m$, with $m$ being the number of qubits used to encode a single dimension of the input.

\subsubsection{Pauli Product Encoding}

The first attack that we present will specifically target Equation \eqref{eqn:pauli_product_encoding}. However, the attack does apply to the Fourier tower map as well. More generally, the procedure applies to any parallel data reuploading schemes of the form:
\begin{align}
\bigotimes_{j=1}^{d}\Big( \bigotimes_{l=1}^{m} R_{\mathsf{X}}(\alpha_l x_j) \Big).
\end{align}
We explicitly utilize Pauli $\mathsf{X}$ rotations, but a similar result holds for $\mathsf{Y}$ or $\mathsf{Z}$.  For Pauli operator $\mathsf{P}$, let $\mathsf{P}_j := i \mathbb{I}^{\otimes(j-1)} \otimes \mathsf{P} \otimes \mathbb{I}^{\otimes(n-j)}$.

\begin{algorithm}
\begin{algorithmic}
\Require Snapshot vector $\mathbf{e}_{\text{snap}}(\mathbf{x})$ of dimension $\dim(\g) = \mathcal{O}(\text{poly}(n))$ corresponding to a basis $(\mathbf{B}_{k})_{k=1}^{\dim(\g)}$ of DLA $\g$. Each $\mathbf{B}_{k}$ is expressed as a linear combination of $\mathcal{O}(\text{poly}(n))$ Pauli strings. Snapshot inversion is being performed for a VQC model that utilized a trainable portion of $\mathbf{U}(\mathbf{\theta})$ with DLA $\g$ and Pauli product encoding Equation \eqref{eqn:pauli_product_encoding}. Index $j \in [d]$, $\epsilon < 1$
\Ensure  An $\epsilon$ estimate of the $j$-th component $x_j$ of the data input $\mathbf{x} \in \mathbb{R}^{d}$ up to periodicity, or output FAILURE.

\If{$i\mathsf{Z}_j \in \g$}
    \State $\alpha \leftarrow 1, \beta \leftarrow 0$
    \State $\mathbf{W} \leftarrow \mathsf{Y}_j$
\ElsIf{$i\mathsf{Y}_j \in \g$}
    \State $\alpha \leftarrow 1, \beta \leftarrow 0$
    \State $\mathbf{W} \leftarrow \mathsf{Y}_j$
\Else 
\State \begin{enumerate}
        \item Determine set of Pauli strings required to span  elements $(i\mathbf{B}_k)_{k=1}^{\dim(\g)}$ and denote the set $\mathcal{P}_{\g}$.
        \item  $\mathcal{P}_{\g} \leftarrow \mathcal{P}_{\g} \cup \{\mathsf{Z}_j, \mathsf{Y}_j\}$, $\lvert \mathcal{P}_{\g}\rvert = \mathcal{O}(\text{poly}(n))$ by assumption. Reduce $\mathcal{P}_{\g}$ to a basis.
        \item Let $\mathbf{C}$ be a $\lvert \mathcal{P}_{\g}\rvert \times \dim(\g)$ matrix whose $k$-th column corresponds to the components of $i\mathbf{B}_k$ in the basis $\mathcal{P}_{\g}$.
        \item Let $\mathbf{A}$ be a $\lvert \mathcal{P}_{\g}\rvert \times 2$ whose first column contains a $1$ in the row corresponding to $\mathsf{Z}_j$ and whose second column contains a $1$ in the row corresponding to $\mathsf{Y}_j$.
        \item Perform a singular value decomposition on $\mathbf{A}^{\mathsf{T}}\mathbf{C}$, and there are at most two nonzero singular values $r_1, r_2$.
    \end{enumerate} 
    \State \If{$r_1 \neq 1$ and $r_2 \neq 1$}
        \State \textbf{return} FAILURE
    \Else
        \State 1. $\mathbf{W} \leftarrow$ singular vector with singular value 1.
        \State 2. Expand $i\mathbf{W}$ in basis $(i\mathsf{Z}_j, i\mathsf{Y}_j)$ record components as $\alpha$ and $\beta$ respectively.
    \EndIf
\EndIf
\begin{enumerate}
    \item Expand $i\mathbf{W}$ in basis $(\mathbf{B}_k)_{k=1}^{\dim(\g)}$, and record components as $\gamma_k$.
    \item Compute \begin{align}
\tilde{x}_j = \cos^{-1}\left[\frac{2}{\text{sign}(\alpha)\sqrt{\alpha^2 + \beta^2}}\sum_{k=1}^{\dim(\g)}\gamma_k[\mathbf{e}_{\text{snap}}]_k \right] - \tan^{-1}(\beta/\alpha).
\end{align}
\item \textbf{return} $\tilde{x}_j$.
\end{enumerate}
\end{algorithmic}
\caption{: Classical Snapshot Inversion for Pauli Product Encoding}
\label{alg:snapshot_inver_pauli}
\end{algorithm}

\begin{theorem}
\label{thm:pauli_feature_resul}
Suppose that the polynomial DLA and slow Pauli expansion (Def~\ref{def:slow_pe}) conditions are satisfied. Also, suppose that we are given a snapshot vector $\mathbf{e}_{\text{snap}}(\mathbf{x})$ for a VQC with trainable portion $\mathbf{U}(\mathbf{\theta})$ with DLA $\g$ and Pauli product feature encoding (Equation \eqref{eqn:pauli_product_encoding}) and the corresponding DLA basis elements $(\mathbf{B}_{k})_{k=1}^{\dim(\g)}$. The classical Algorithm \ref{alg:snapshot_inver_pauli} outputs an $\epsilon$ estimate of  $x_j$, up to periodicity, or outputs \textup{FAILURE}, with time $\mathcal{O}(\text{poly}(n)\log(1/\epsilon))$.
\end{theorem}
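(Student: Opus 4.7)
The plan is to verify Algorithm \ref{alg:snapshot_inver_pauli} by establishing three things in turn: (i) what form an operator $\mathbf{W}$ must take for its expectation on $\rho(\mathbf{x})$ to depend on $x_j$ alone in an invertible way, (ii) that the SVD-based search correctly produces such a $\mathbf{W}$ from the given DLA basis or otherwise reports \textup{FAILURE}, and (iii) that every step runs in $\mathcal{O}(\text{poly}(n)\log(1/\epsilon))$. First I would exploit the product form $\rho(\mathbf{x}) = \bigotimes_k \rho_k(x_k)$: direct computation on a single qubit gives $\text{Tr}(\mathsf{Z}\rho_k(x_k)) = \cos x_k$, $\text{Tr}(\mathsf{Y}\rho_k(x_k)) = -\sin x_k$, and $\text{Tr}(\rho_k(x_k)) = 1$, so any Hermitian operator $\mathbf{W} = \alpha \mathsf{Z}_j + \beta \mathsf{Y}_j$ acting as identity on the other qubits satisfies
\begin{equation}
    \text{Tr}(\mathbf{W}\rho(\mathbf{x})) \;=\; \alpha \cos x_j - \beta \sin x_j \;=\; \text{sign}(\alpha)\sqrt{\alpha^2+\beta^2}\,\cos\!\bigl(x_j + \tan^{-1}(\beta/\alpha)\bigr),
\end{equation}
a pure sinusoid in $x_j$ alone, invertible by $\arccos$ up to periodicity in $\mathcal{O}(\log(1/\epsilon))$ arithmetic operations. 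Expanding such a $\mathbf{W}$ as $\sum_k \gamma_k \mathbf{B}_k$ in the orthonormal DLA basis then gives $\text{Tr}(\mathbf{W}\rho(\mathbf{x})) = \sum_k \gamma_k [\mathbf{e}_{\text{snap}}]_k$ directly from the snapshot, reproducing the final formula of the algorithm.

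Next I would justify the SVD step as a constructive procedure for locating such a $\mathbf{W}$ inside $\g$. The two easy branches are immediate: if $i\mathsf{Z}_j$ or $i\mathsf{Y}_j$ itself lies in $\g$, one directly takes $\mathbf{W} = \mathsf{Z}_j$ or $\mathsf{Y}_j$. For the general branch, augmenting the Pauli support set to $\mathcal{P}_\g \cup \{\mathsf{Z}_j,\mathsf{Y}_j\}$ (still polynomial by slow Pauli expansion) equips the ambient space with an orthonormal Pauli basis in which the columns of $\mathbf{C}$ express the DLA basis orthonormally, while the columns of $\mathbf{A}$ isometrically embed $\{i\mathsf{Z}_j,i\mathsf{Y}_j\}$. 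Hence for any unit vector $\mathbf{v}\in\mathbb{R}^{\dim(\g)}$, the element $\mathbf{W}' := \sum_k v_k\, i\mathbf{B}_k$ is a unit vector in $\g$ and $(\mathbf{A}^{\mathsf{T}}\mathbf{C})\mathbf{v}$ is exactly the coordinate tuple of its orthogonal projection onto $\text{span}(i\mathsf{Z}_j, i\mathsf{Y}_j)$. A right-singular value equal to $1$ therefore certifies, via the Pythagorean identity, that $\mathbf{W}'$ lies entirely in $\text{span}(i\mathsf{Z}_j, i\mathsf{Y}_j)$, yielding the required $\mathbf{W} = -i\mathbf{W}'$ together with coefficients $(\alpha,\beta)$ read off in the $(\mathsf{Z}_j,\mathsf{Y}_j)$ basis. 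Conversely, if neither $r_1$ nor $r_2$ equals $1$, then no Hermitian element of $\g$ sits inside $\text{span}(\mathsf{Z}_j,\mathsf{Y}_j)$, so reporting \textup{FAILURE} is correct.

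The complexity bound then follows by inspection: under the polynomial DLA and slow Pauli expansion hypotheses, both $\dim(\g)$ and $\lvert\mathcal{P}_\g\rvert$ are $\mathcal{O}(\text{poly}(n))$, so constructing $\mathbf{C}$, forming the $2\times\dim(\g)$ matrix $\mathbf{A}^{\mathsf{T}}\mathbf{C}$, computing its SVD, expanding $\mathbf{W}$ back in the DLA basis, and evaluating the inner product with $\mathbf{e}_{\text{snap}}$ are all polynomial in $n$, while the concluding $\arccos$/$\arctan$ evaluation contributes the $\log(1/\epsilon)$ factor. The step I expect to demand the most care is a precise bookkeeping of normalization and conjugation conventions, specifically the Frobenius scaling of the Pauli basis, the anti-Hermitian/Hermitian correspondence between elements of $\g$ and observables, and the factor of $2$ visible in the algorithm's inversion formula, so that the threshold ``singular value equals $1$'' genuinely certifies $\mathbf{W}' \in \text{span}(i\mathsf{Z}_j, i\mathsf{Y}_j)$ rather than being an artifact of a particular scaling, and so that the extracted $(\alpha,\beta)$ exactly match the coefficients required to invert the trace formula.
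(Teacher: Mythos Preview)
Your proposal is correct and follows essentially the same approach as the paper's proof: reduce to the single-qubit Bloch-sphere identity $\text{Tr}([\alpha\mathsf{Z}_j+\beta\mathsf{Y}_j]\rho(\mathbf{x}))=\alpha\cos x_j-\beta\sin x_j$ (the paper carries an extra factor of $\tfrac12$, which is precisely the normalization subtlety you flag), interpret the SVD of $\mathbf{A}^{\mathsf T}\mathbf{C}$ as computing principal angles between $\g$ and $\text{span}\{i\mathsf{Z}_j,i\mathsf{Y}_j\}$ so that a singular value $1$ certifies a nontrivial intersection, and then read off $x_j$ via $\arccos$ of the snapshot inner product. If anything, your justification of the SVD step is more explicit than the paper's, which simply states its purpose without the Pythagorean argument you give.
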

\begin{proof}

Steps 1-5 can be performed in $\mathcal{O}(\text{poly}(n))$ classical time due to the polynomial DLA and slow Pauli expansion conditions. The purpose of step 5 is to compute the angles between the linear subspaces $\g$ and $\text{span}_{\mathbb{R}}\{ i\mathsf{Z}_j, i\mathsf{Y}_j\}$. This is to identify if there is any intersection, i.e. if $\exists \alpha, \beta $ such that $\alpha i\mathsf{Z}_j + \beta i\mathsf{Y}_j \in \g$, which is identified by singular values equal to 1. The algorithm cannot proceed if the intersection is trivially empty as the snapshot vector does not provide the required measurement to obtain $x_j$ efficiently with this scheme. From now on, we suppose that such an element was found.

We can without loss of generality just focus on the one-qubit reduced density matrix for $x_j$. In this case, using Bloch sphere representation:
\begin{align}
\rho_j(x_j) &= R_{\mathsf{X}}(x_j)|0\rangle\langle0|R_{\mathsf{X}}(-x_j)= \frac{\mathbb{I} - \sin(x_j)\mathsf{Y} +\cos(x_j)\mathsf{Z}}{2},
\end{align}

so that 
\begin{align}
\text{Tr}([\alpha\mathsf{Z}_j + \beta \mathsf{Y}_j]\rho_j(x_j)) &= \frac{\alpha}{2}\cos(x_j) - \frac{\beta}{2}\sin(x_j)\\
&=\frac{\text{sign}(\alpha)}{2}\sqrt{\alpha^2 + \beta^2}\cos(x_j + \tan^{-1}(\beta/\alpha)).
\end{align}
However, by assumption, $\gamma_k \in \mathbb{R}$ such that
\begin{align}
\alpha i\mathsf{Z}_j + \beta i\mathsf{Y}_j  = \sum_{k=1}^{\dim(\g)}\gamma_i\mathbf{B}_k\implies \text{Tr}([\alpha\mathsf{Z}_j + \beta \mathsf{Y}_j]\rho_j(x_j)) = \sum_{k=1}^{\dim(\g)}\gamma_k[\mathbf{e}_{\text{snap}}]_k.
\end{align}
So to recover $x_j$, we only need to solve:
\begin{align}
\sum_{k=1}^{\dim(\g)}\gamma_k[\mathbf{e}_{\text{snap}}]_k = \frac{\text{sign}(\alpha)}{2}\sqrt{\alpha^2 + \beta^2}\cos(x_j + \tan^{-1}(\beta/\alpha)),
\end{align}
which after rearranging allows the recovery of 
\begin{align}
x_j = \cos^{-1}\left[\frac{2}{\text{sign}(\alpha)\sqrt{\alpha^2 + \beta^2}}\sum_{k=1}^{\dim(\g)}\gamma_k[\mathbf{e}_{\text{snap}}]_k \right] - \tan^{-1}(\beta/\alpha),
\end{align}
up to periodicity. By the polynomial DLA and slow Pauli expansion conditions (i.e. all DLA basis elements are expressed as linear combinations of Paulis), we can compute $\gamma_k$ in $\mathcal{O}(\text{poly}(n)\log(1/\epsilon))$ time.
\end{proof}

For illustrative purposes, we show in Figure \ref{fig:prod-map-easy} the snapshot inversion process for the special case where $i\mathsf{Z}_j \in \g$, i.e. 
\begin{align}
    x_j = \cos^{-1}\left(2\mathbf{\gamma}^{(j)}\cdot \mathbf{e}_{\text{snap}}\right),
\end{align} for $i\mathsf{Z}_j = \sum_{k=1}^{\dim(\g)}\gamma_k^{(j)}\mathbf{B}_k$. The general parallel data reuploading case can be handled by applying the procedure to only one of the rotations that encodes at $x_j$ at a time, checking to find one that does not cause the algorithm to return FAILURE.

\begin{figure}
    \centering
    \includegraphics[width=0.5\textwidth]{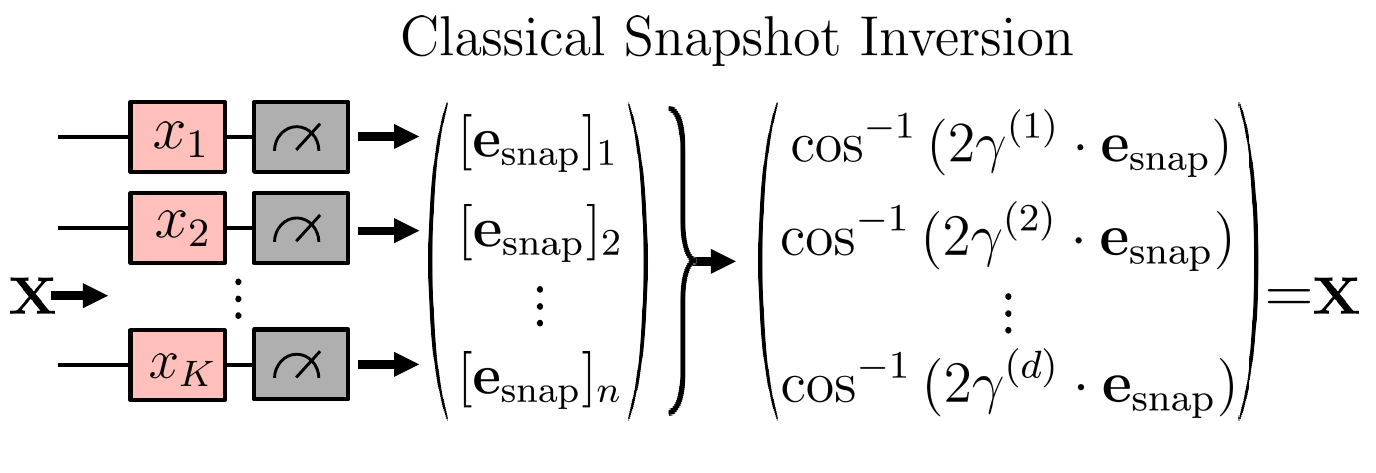}
    \caption{A product map encoding, whereby each input variable $x_j$ is encoded into an individual qubit, and the snapshot used by the model corresponds to single qubit measurements of the DLA basis elements. In this setting, the snapshot is trivial to invert and find the original data using the relation $x_j = \cos^{-1}\left(2\mathbf{\gamma}^{(j)}\cdot \mathbf{e}_{\text{snap}}\right) $.} 
    \label{fig:prod-map-easy}
\end{figure}

\subsubsection{General Pauli Encoding}\label{sec:general-pauli}

We now present a more general procedure that applies to feature maps that use serial data reuploading and multi-qubit Paulis. However, we introduce a condition that ensures that each $x_j$ is locally encoded. More generally, we focus our discussion on encoding states that may be written as a tensor product of $\Omega$ subsystems, i.e. multipartite states. 
\begin{align}
\label{eqn:separable_state}
\rho(\mathbf{x}) = \bigotimes_{ J \in \mathcal{P} } {\rho}_J(\mathbf{x}),
\end{align}
where $\dim(\mathsf{x}_J)$ is constant. The procedure is highlighted in Algorithm \ref{alg:snapshot_inver_gen_pauli} and requires solving a system of polynomial equations. 

In addition, the procedure may not be completely classical as quantum assistance may be required to compute certain expectation values of $\rho_J(\mathbf{x})$, specifically with respect to the DLA basis elements. For simplicity, the algorithm and the theorem characterizing the runtime ignore potential errors in estimating these expectations. If classical estimation is possible, then we can potentially achieve a $\mathcal{O}(\text{poly}(\log(1/\epsilon)))$ scaling. However, if we must use quantum, then we will incur a $\mathcal{O}(1/\epsilon)$ (due to amplitude estimation) dependence, which can be significant. Theorem~\ref{thm:seperable-proof} presents the attack complexity ignoring these errors.

\begin{algorithm}
\begin{algorithmic}
\Require Snapshot vector $\mathbf{e}_{\text{snap}}(\mathbf{x})$ of dimension $\dim(\g) = \mathcal{O}(\text{poly}(n))$ corresponding to a basis $(\mathbf{B}_{k})_{k=1}^{\dim(\g))}$ of DLA $\g$. Each $\mathbf{B}_{k}$ is expressed as a linear combination of $\mathcal{O}(\text{poly}(n))$ Pauli strings. Snapshot inversion is being performed for a VQC model that utilized a trainable portion of $\mathbf{U}(\mathbf{\theta})$ with DLA $\g$ and separable encoding Equation \eqref{eqn:separable_state} with qubit partition $\mathcal{P}$. Index $j \in [d]$, $\epsilon < 1$
\Ensure  An $\epsilon$ estimate of the $j$-th component $x_j$ of the data input $\mathbf{x} \in \mathbb{R}^{d}$ up to periodicity

\begin{enumerate}
    \item Find a $\rho_J$ for $J \in \mathcal{P}$ that depends on $x_j$. Let $R$ denote the number of Pauli rotations in the circuit for preparing $\rho_J$ that involve $x_j$.
    \item For each $k \in [\dim(g)]$, compute $\textup{Tr}(\mathbf{B}_k\rho_J(\mathbf{x}))$ and $\textup{Tr}(\mathbf{B}_k\rho_{J^{c}}(\mathbf{x}))$.
    \item Determine the set $\mathcal{S}_J = \{ k : \textup{Tr}(\mathbf{B}_k\rho_J(\mathbf{x})) \neq 0~\&~ \textup{Tr}(\mathbf{B}_k\rho_{J^{c}}(\mathbf{x})) =0 \}$, $J^c := [n] - J$.
    \end{enumerate}
    \If{$\mathcal{S}_J < \dim(\mathbf{x}_J)$}
        \State \textbf{return} FAILURE
    \Else 
     \State {1. For each $k \in \mathcal{S}_J$ evaluate $\textup{Tr}(\mathbf{B}_k\rho_J(\mathbf{x}))$ at $M = 2R^{\dim(\mathsf{x}_J)}+1$ points, $\mathbf{x}_r \in \{\frac{2\pi r}{2R+1} : r = -R, \dots R\}^{\dim(\mathsf{x}_J)}$} 
     \State{2. For each $k$, solve linear system $$\text{Tr}(\mathbf{B}_k\rho_J(\mathbf{x}_{r})) = \alpha_0 + \sum_{r \in [R]^{\text{dim}(\mathsf{x}_J)}}\alpha_{r}e^{i\mathbf{r}\cdot \mathbf{x}_r}$$ \quad\quad for $\alpha$'s.}
    \State 3. Consider the polynomial system:
        \begin{align}
    &[\mathbf{e}_{\text{snap}}]_k = \Re\left[\alpha_0 + \sum_{r \in [R]^{\text{dim}(\mathsf{x}_J)}}\alpha_{\mathbf{r}}\prod_{j=1}^{\dim(\mathsf{x}_J)}(T_{r_j}(u_j) + iv_jU_{r_j-1}(u_j))\right],  k \in \mathcal{S}_J\\
    & u_j^2 + v_j^2 = 1, j \in J,
\end{align}
    \quad\quad where $u_j = \cos(x_j), v_k =\sin(x_j)$ and $T_{r}, U_{r}$ relate to Chebyshev polynomials.
    \State 4. Apply Buchberger's to obtain a Gr\"obner basis for the system.
    \State 5. Back substitution and univariable root-finding algorithm \cite{nocedal1999numerical} (e.g., Jenkins-Traub \cite{jenkins1970three}) to obtain $\tilde{\mathbf{x}}_J$.
    \State 6. \textbf{return} $\tilde{\mathbf{x}}_j$
    \EndIf
\end{algorithmic}
\caption{: Snapshot Inversion for General Pauli Encodings}
\label{alg:snapshot_inver_gen_pauli}
\end{algorithm}

\begin{theorem}\label{thm:seperable-proof}
    Suppose that the feature encoding state $\rho(\mathbf{x})$ is a multipartite state, specifically there exists a partition $\mathcal{P}$ of qubits $[n]$ such that
    $$\rho(\mathbf{x}) = \bigotimes_{ J \in \mathcal{P} } {\rho}_J(\mathbf{x}),$$
    where we define  $\mathsf{x}_J \subseteq \mathbf{x}$ to be components of $\mathbf{x}$ on which $\rho_{J}$ depends. In addition, we have as input an $\mathcal{O}(\textup{poly}(n))$-dimensional snapshot vector $\mathbf{e}_{\text{snap}}$ with respect to a known basis $\mathbf{B}_k$ for the DLA of the VQC.
    
    Suppose that for $\rho_{J}(\mathbf{x})$ the following conditions are satisfied:
    \begin{itemize}
        \item $\textup{dim}(\mathbf{x}_J)  =  \mathcal{O}(1)$,
        \item each $x_k$ is encoded at most $R=\mathcal{O}(\textup{poly}(n))$ times in, potentially multi-qubit, Pauli rotations.
        \item and the set $\mathcal{S}_J = \{ k : \text{Tr}(\mathbf{B}_k\rho_J(\mathbf{x})) \neq 0~\&~\text{Tr}(\mathbf{B}_k\rho_{J^c}(\mathbf{x})) =0 \}$ has cardinality at least $\textup{dim}(\mathsf{x}_J)$, where $J^c := [n] - J$.
    \end{itemize}
then the model admits quantum-assisted snapshot inversion for recovering $\mathsf{x}_J$. Furthermore, a classical snapshot inversion can be performed if $\forall k, \textup{Tr}(\mathbf{B}_k\rho_J(\mathbf{x}))$ can be evaluated classically for all $\mathbf{x}$. Overall, ignoring error in estimating $\textup{Tr}(\mathbf{B}_k\rho_J(\mathbf{x}))$, with the chosen parameters, this leads to a $\mathcal{O}(\textup{poly}(n, \log(1/\epsilon)))$ algorithm.
\end{theorem}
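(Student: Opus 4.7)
The plan is to exploit the multipartite structure of $\rho(\mathbf{x})$ to isolate a tractable block of snapshot components that depend only on $\mathsf{x}_J$, then to recover $\mathsf{x}_J$ by recognizing each such component as a trigonometric polynomial of low degree in the unknown variables.

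First I would verify the key reduction: for $k \in \mathcal{S}_J$, the snapshot entry $[\mathbf{e}_{\text{snap}}]_k = \text{Tr}(\mathbf{B}_k \rho(\mathbf{x}))$ collapses to a function of $\mathsf{x}_J$ alone. Expanding $\mathbf{B}_k$ across the bipartition $(J, J^c)$ as a sum of tensor products and using $\rho(\mathbf{x}) = \rho_J(\mathbf{x}) \otimes \rho_{J^c}(\mathbf{x})$, the defining condition of $\mathcal{S}_J$ kills all summands whose $J^c$ component has a nonzero overlap with $\rho_{J^c}$, leaving only terms proportional to an identity-like factor on $J^c$. This shows $[\mathbf{e}_{\text{snap}}]_k = f_k(\mathsf{x}_J)$ up to an a priori computable normalization, and by assumption there are at least $\dim(\mathsf{x}_J)$ such equations — the bare minimum to pin down $\mathsf{x}_J$.

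Second, I would exploit the circuit structure of $\rho_J$. Because $\rho_J(\mathbf{x})$ is prepared by at most $R$ Pauli rotations whose angles are drawn from $\mathsf{x}_J$, each $f_k$ is a trigonometric polynomial whose Fourier spectrum in each variable $x_j$ is supported on integers of magnitude at most $R$. Hence $f_k$ has only $O((2R+1)^{\dim(\mathsf{x}_J)})$ Fourier coefficients, polynomially many under the stated hypotheses. I would recover these coefficients explicitly by evaluating $f_k$ at a unisolvent grid of points $\mathbf{x}_r$ and inverting the resulting (well-conditioned) Fourier Vandermonde system. The evaluations themselves are supplied either classically, whenever $\text{Tr}(\mathbf{B}_k \rho_J(\mathbf{x}'))$ admits an efficient simulation for any $\mathbf{x}'$, or via quantum assistance by preparing $V(\mathbf{x}')$ and using amplitude estimation at cost $O(1/\epsilon)$ — precisely the dichotomy claimed in the statement.

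Third, I would convert the $\dim(\mathsf{x}_J)$ recovered equations $[\mathbf{e}_{\text{snap}}]_k = f_k(\mathsf{x}_J)$ into an algebraic system by substituting $u_j = \cos(x_j)$, $v_j = \sin(x_j)$ and rewriting the harmonics via the Chebyshev identities $\cos(r x_j) = T_r(u_j)$, $\sin(r x_j) = v_j U_{r-1}(u_j)$, adjoining the circle constraints $u_j^2 + v_j^2 = 1$. This yields a zero-dimensional polynomial system in $2\dim(\mathsf{x}_J)$ variables with total degree $O(R)$. I would then compute a lexicographic Gr\"obner basis via Buchberger's algorithm and back-substitute, using a univariate root finder such as Jenkins--Traub to locate each coordinate to accuracy $\epsilon$, recovering $\mathsf{x}_J$ up to the intrinsic periodicity of the encoding.

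The main obstacle is the complexity bookkeeping for the Gr\"obner basis step, since Buchberger's algorithm is doubly exponential in the number of variables in the worst case. The reduction to an $O(1)$-variable system via the assumption $\dim(\mathsf{x}_J) = O(1)$ is precisely what rescues polynomial scaling: the cost becomes $\text{poly}(R) = \text{poly}(n)$, and likewise for the final root-finding stage. A secondary subtlety is ensuring that the polynomial system is not underdetermined — this is exactly where the cardinality condition $|\mathcal{S}_J| \geq \dim(\mathsf{x}_J)$ enters — together with a careful argument that the remaining ambiguities correspond only to the inherent periodicity of the trigonometric encoding and not to spurious algebraic solutions. Error analysis from the approximate estimation of $\text{Tr}(\mathbf{B}_k \rho_J(\mathbf{x}'))$ is deferred per the theorem's standing assumption.
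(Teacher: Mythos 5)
Your proposal is correct and follows essentially the same route as the paper's proof: express each $\mathrm{Tr}(\mathbf{B}_k\rho_J(\mathbf{x}))$ as a low-degree trigonometric polynomial in $\mathsf{x}_J$ (via the $\pm 1$ eigenvalue structure of Pauli rotations), recover its Fourier coefficients from $\mathcal{O}(R^{\dim(\mathsf{x}_J)})$ grid evaluations supplied classically or with quantum assistance, convert to a polynomial system via the Chebyshev substitution with circle constraints, and solve with Buchberger's algorithm plus back-substitution and Jenkins--Traub root finding, with $\dim(\mathsf{x}_J)=\mathcal{O}(1)$ taming the doubly-exponential Gr\"obner degree bound. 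Your additional remark making explicit why membership in $\mathcal{S}_J$ reduces $[\mathbf{e}_{\text{snap}}]_k$ to a function of $\mathsf{x}_J$ alone is a welcome elaboration of a step the paper leaves implicit, but it does not constitute a different approach.
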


\begin{proof}

Given that each $x_k$ is encoded with multiqubit Pauli rotations, i.e. possible eigenvalues are $1$ and $-1$, it is well known \cite{Wierichs2022generalparameter} that the following holds:
\begin{align}
      f_k(\mathsf{x}_J) = \text{Tr}(\mathbf{B}_k\rho_J(\mathbf{x})) =  \alpha_0 + \sum_{r \in [R]^{\text{dim}(\mathsf{x}_J)}}\alpha_{r}e^{i\mathbf{r}\cdot \mathbf{x}_r}, \forall k \in \mathcal{S},
\end{align}
and $\text{Tr}(\mathbf{B}_k\rho_J(\mathbf{x}))$ is real. The set $\mathcal{S}_J$ is to ensure that we can isolate a subsystem where $\dim(\mathsf{x}_J)$ is constant. 
To ensure that the number of terms is $\mathcal{O}(\text{poly}(n))$  it suffices to restrict to $\text{dim}(\mathsf{x}_J)=\mathcal{O}(\log(n)), R= \mathcal{O}(\log(n))$. %
The $\alpha$ coefficients can be computed by evaluating $\text{Tr}(\mathbf{B}_k\rho_J(\mathbf{x}))$ at $2R^{\text{dim}(\mathsf{x}_J)} +1 = \mathcal{O}(\text{poly}(n))$ different points $\mathbf{x}'$. 
Depending on whether $\text{Tr}(\mathbf{B}_k\rho_J(\mathbf{x}))$ can be evaluated classically or quantumly implies whether this falls under classical or quantum-assisted snapshot inversion. This leads to a system of $\text{dim}(\mathsf{x}_J)$ equations in $\mathsf{x}_J$:
\begin{align}
    [\mathbf{e}_{\text{snap}}]_k = f_k(\mathsf{x}_J), k=1, \dots, \dim(\g).
\end{align}
Using the Chebyshev polynomials $T_n$, $U_n$ of the first and second kind, respectively, we can expressed the system as a system of polynomial equations with additional constraints:
\begin{align}
\label{eqn:system}
    &[\mathbf{e}_{\text{snap}}]_k = \Re\left[\alpha_0 + \sum_{r \in [R]^{\text{dim}(\mathsf{x}_J)}}\alpha_{\mathbf{r}}\prod_{j=1}^{\dim(\mathsf{x}_J)}(T_{r_j}(u_j) + iv_jU_{r_j-1}(u_j))\right],  k \in \mathcal{S}_J\\
    & u_j^2 + v_j^2 = 1, j \in J,
\end{align}
where $u_j = \cos(x_j), v_k =\sin(x_j)$. In addition, we use the Chebyshev polynomials defined as $\cos(n\theta) = T_n(\cos(\theta))$ and $\sin(\theta)U_{n-1}(\cos(\theta))=\sin(n\theta)$.
By our assumption that the DLA is polynomial, we have $\mathcal{O}(\text{poly}(n))$ equations in $2\dim(\mathsf{x}_J) = \mathcal{O}(\log\log(n))$ unknowns.

If all conditions until now are satisfied, we will have successfully written down a system of determined simultaneous equations. Considering bounds from computational geometry we note that in the worst-case of Buchberger's algorithm \cite{buchberger1985} the degrees of a reduced Gröbner basis are bounded by
\begin{equation}
M = 2 \Big( \frac{\Delta^2}{2} + \Delta \Big)^{2^{Q - 2}},
\end{equation}
where $\Delta$ is the maximum degree of any polynomial and $Q$ is the number of unknown variables \cite{dube1990grobner}. For a system of linear equations, it was shown that a worst case degree bound grows double exponentially in the number of variables \cite{MAYR1982305}. The maximum degree of any equation in Eq~\ref{eqn:system} is $\Delta = R^{\dim(\mathsf{x}_J)}$, and $Q = 2\dim(\mathsf{x}_J)$ so that
\begin{align}
    M = \mathcal{O}(R^{2\dim(\mathsf{x}_J)2^{\dim(\mathsf{x}_J)}}),
\end{align}
so for our chosen conditions the maximum degree is bounded by $M = \mathcal{O}(\text{poly}(n))$.

Buchberger's algorithm provides a Gr\"{o}bner basis in which backsubstituion could be used to solve equations in one variable. Numerical methods for solving polynomials in one variable generally scale polynomially in the degree. For solving each univariate polynomial at each step of the back substitution, we can apply a polynomial root-finding method, such that Jenkins--Traub \cite{jenkins1970three}, which can achieve at least quadratic global convergence (converge from any initial point and at a rate that is at least $\log\log(1/\epsilon)$). This leads to an overall $\mathcal{O}(\text{poly}(n, \log(1/\epsilon))$ algorithm, ignoring error in estimating $\Tr(\mathbf{B}_j\rho_J)$.
\end{proof}

In the case a circuit has an encoding structure that leads to a separable state, we have indicated conditions that guarantee snapshot inversion can be performed. If the model is also snapshot recoverable, by having a polynomially sized DLA, then this means the initial data input can be fully recovered from the gradients, and hence the attack constitutes a strong privacy breach. 

\subsection{Snapshot Inversion for Generic Encodings}\label{sec:approx-inversion}

In the general case but still $\dim(\g) = \mathcal{O}(\text{poly}(n))$, where it is unclear how to make efficient use of our knowledge of the circuit, we attempt to find an $\mathbf{x}$ via black-box optimization methods that produces the desired snapshot signature. More specifically, suppose for simplicity we restrict our search to $[-1, 1]^{d}$. We start with an initial guess for the input parameters, denoted as $\mathbf{x}'$, and use these to calculate expected snapshot values $\text{Tr}[\mathbf{B}_k\rho(\mathbf{x}')]$. A cost function can then be calculated that compares this to the true snapshot, denoted $\mathbf{e}_{\text{snap}}$. As an example, one can use the mean squared error as the cost function,
\begin{equation}
\label{eqn:cost}
    f(\mathbf{x}') = \| \mathbf{e}_{\text{snap}} - (\text{Tr}[\mathbf{B}_k\rho(\mathbf{x}')])_{k=1}^{\dim(\g)} \|^2_2 = \sum_{k \in [\text{dim}(\g)]} \left([\mathbf{e}_{\text{snap}}]_k -  \text{Tr}[\mathbf{B}_k\rho(\mathbf{x}')]\right)^2.
\end{equation}

The goal will be to solve the optimization problem $\min_{\mathbf{x}' \in [-1, 1]^{d}} f(\mathbf{x}')$. For general encoding maps, it appears that we need to treat this as a black-box optimization problem, where we evaluate the complexity in terms of the evaluations of $f$ or, potentially, its gradient. However, in our setting, it is unclear what is the significance of finding approximate local minimum, and thus it seems for privacy breakage, we must resort to an exhaustive grid search. For completeness, we still state results on first-order methods that can produce approximate local minima.

We start by reviewing some of the well-known results for black-box optimization. We recall Lipschitz continuity by,
\begin{definition}[$L$-Lipschitz Continuous Function] A function $f : \mathbb{R}^d \rightarrow \mathbb{R}$ is said be be $L$-Lipschitz continuous if there exists a real positive constant $L > 0$ for which,
$$\rvert f(\mathbf{x}) - f(\mathbf{y}) \rvert \leq L \rVert \mathbf{x} - \mathbf{y} \rVert_{2}.$$
\end{definition}

If we consider the quantum circuit as a black-box $L$-Lipschitz function and $\mathbf{x}'$ in some convex, compact set with diameter $P$ (e.g. $[-1, 1]^{d}$ with diameter $2\sqrt{d}$). One can roughly upper bound $L$ by the highest frequency component of multidimensional trig series for $f$, which can be an exponential in $n$ quantity.
In this case the amount of function evaluations that would be required to find $\mathbf{x}'$ such that $\| \mathbf{x} - \mathbf{x}'\|_2 \leq \epsilon$  would scale as \begin{align} \label{eqn:gridsearch}
\mathcal{O}\left( P\left(\frac{L}{\epsilon}\right)^d \right),
\end{align}
which is the complexity of grid search \cite{nesterov2018lectures}.
Thus if for constant $L$ this is a computationally daunting task, i.e. exponential in $d =\Theta(n)$.

As mentioned earlier, we it is possible to resort to first-order methods to obtain, an effectively dimension-independent, algorithm for finding an approximate local minima. We recall the definition of  $\beta$-smoothness as,
\begin{definition}[$\beta$-Smooth Function] 
A differentiable function $f : \mathbb{R}^d \rightarrow \mathbb{R}$ is said be be $\beta$-smooth if there exists a real positive constant $\beta > 0$ for which
$$ \rVert \nabla f(x) - \nabla f(y) \rVert_{2} \leq \beta \rVert x - y \rVert_2.$$
\end{definition}
If we have access to gradients of the cost function with respect to each parameter, then using perturbed gradient descent \cite{jin2017escape} would roughly require
\begin{equation}\label{eqn:var-scaling-lips}
     \tilde{\mathcal{O}}\Big(  \frac{P L\beta}{\epsilon^2} \Big),
\end{equation}
function and gradient evaluations for an $L$-Lipschitz function that is $\beta$-smooth to find an approximate local min. With regards to first-order optimization, computing the gradient of $f$ can be expressed in terms of computing certain expectations values of $\rho$, either via finite-difference approximation or the parameter-shift rule for certain gate sets \cite{Wierichs2022generalparameter}. 

Regardless whether recovering an approximate local min reveals any useful information about $\mathbf{x}$, up to periodicity, it is still possible to make such a task challenging for an adversary. In general, the encoding circuit will generate expectation values with trigonometric terms. To demonstrate, we can consider a univariate case of a single trigonomial $f(x) = \sin{(\omega x)}$, with frequency $\omega$. This function will be $\omega$-Lipschitz continuous with $\omega^2$-Lipchitz continuous gradient. Hence, when considering the scaling of gradient-based approach in Eq~\ref{eqn:var-scaling-lips} we see that the frequency of the trigonometric terms will directly impact the ability to find a solution. Hence, if selecting a frequency that scales exponentially $\omega = \mathcal{O}(\exp(n))$, then snapshot inversion appears to be exponentially difficult with this technique.

Importantly, if the feature map includes high frequency terms, for example the Fourier Tower map of \cite{kumar2023expressive}, then $\beta$ and $L$ can be $\mathcal{O}(\text{exp}(n))$. However, as noted in Section \ref{sec:separable_enc} it is possible to make use of the circuit structure to obtain more efficient attacks. In addition, a poor local minimum may not leak any information about $\mathbf{x}$.

\subsubsection{Direct Input Recovery}

Note that it also may be possible to completely skip the snapshot recovery procedure and instead variationally adjust $\mathbf{x}'$ so that the measured gradients of the quantum circuit $C_j'$, match the known gradients $C_j$ with respect to the actual input data. This approach requires consideration of the same scaling characteristics explained in Eq~\ref{eqn:var-scaling-lips}, particularly focusing on identifying the highest frequency component in the gradient spectrum. If the highest frequency term in the gradient $C_j$ scales exponentially, $\omega = \mathcal{O}(\text{exp}(n))$, then even gradient descent based methods are not expected to find an approximate local min in polynomial time.

Further privacy insights can be gained from Eq~\ref{eqn:grad-in-exp-snap} where a direct relationship between the gradients and the expectation value snapshot is shown, which in general can be written as
\begin{equation}
    C_j(\mathbf{x}) = \chi_t^{(j)}\cdot \mathbf{e}_{\text{snap}}(\mathbf{x}).
\end{equation}
This indicates that the highest frequency terms of any $ \mathbf{e}_{\text{snap}} $ component will also correspond to the highest frequency terms in $ C_j(\mathbf{x})$, as long as its respective coefficient is non-zero $\chi_t^{(j)} \neq 0$.

This underscores scenarios where direct input recovery may prove more challenging compared to snapshot inversion, particularly in a VQC model. Consider a subset $\tilde{\mathbf{e}}_{\text{snap}} \subseteq \mathbf{e}_{\text{snap}}$ where each component has the highest frequency that scales polynomially with $n$. If there are sufficiently many values in $\tilde{\mathbf{e}}_{\text{snap}} $ then recovering approximate local min to Eq~\ref{eqn:cost} may be feasible for these components. %
However, for gradient terms $C_j(\mathbf{x})$ that depend on all values of $\mathbf{e}_{\text{snap}}$, including terms outside of  $\tilde{\mathbf{e}}_{\text{snap}} $ that exhibit exponential frequency scaling, then gradient descent methods may take exponentially long when attempting direct input inversion, even if recovering approximate local minima to the  snapshot inversion task can be performed in polynomial time. 

Investigations into direct input recovery have been covered in previous work \cite{kumar2023expressive} where the findings concluded that the gradients generated by $ C_j(\mathbf{x})$ would form a loss landscape dependent on the highest frequency $\omega$ generated by the encoding circuit, indicating that exponentially scaling frequencies led to models that take exponential time to recover the input using quantum-assisted direct input recovery. The Fourier tower map encoding circuit used in \cite{kumar2023expressive} was designed such that $\omega$ scales exponentially to provide privacy, this was done by using $m$ qubits in a sub-register per data input $x_j$, with the single qubit rotation gates parameterized by an exponentially scaling amount. The encoding can be defined as
\begin{equation}
    \bigotimes_{j=1}^{d}\Big( \bigotimes_{l=1}^{m} R_{\mathsf{X}}( 5^{l-1} x_j) \Big).
\end{equation}
Hence, the gradient contained exponentially scaling highest frequency terms, leading to a model where gradient descent techniques took exponential time. However, if considering the expectation value of the first qubit in a sub-register of this model, we note this corresponds to a frequency $\omega = 1$ and hence the respective expectation value for the first qubit would be snapshot invertible. However, in the case of \cite{kumar2023expressive} the DLA was exponentially large, meaning the model was not snapshot recoverable, hence these snapshots could not be found to then be invertible. Hence, from our new insights, we can conclude that the privacy demonstrated in \cite{kumar2023expressive} was dependent on having an exponential DLA dimension. However, an exponentially large DLA also led to an untrainable model, limiting the real-world applicability of this previous work. Lastly, recall that Algorithm \ref{alg:snapshot_inver_pauli} in the case poly DLA and slow Pauli expansion is a completely classical snapshot inversion attack for the Fourier tower map. Further, highlighting how snapshot inversion can be easier than direct inversion.

We show that both direct input recovery and snapshot inversion are dependent on frequencies $\omega$ generated by the encoding circuit, highlighting that this is a key consideration when constructing VQC models.
The introduction of high frequency components can be used to slow down methods that obtain approximate local minimum to Eq~\ref{eqn:cost}. However, for true privacy breakage, it appears that in general we still need to resort to grid search, which becomes exponentially hard with dimension regardless of high-frequency terms. However, for problems with a small amount of input, introducing high frequency terms can be used to also make grid search harder. The idea of introducing large frequencies is a proxy for the more general condition that our results hint at \emph{for privacy}, which is that \emph{the feature map $\rho(\mathbf{x}')$ should be untrainable in terms of varying $\mathbf{x}'$}.

Notably, cases exist where the same model can have an exponential frequency gradient, but can still contain a certain number of expectation snapshot values with polynomial scaling frequencies. Hence, it is also important to note that merely showing that a model is not direct input recoverable does not guarantee privacy, as one needs to also consider that if the model is snapshot recoverable and that these snapshots may be invertible if sufficient polynomial scaling frequency terms can be recovered. This duality highlights the complexity of ensuring privacy in quantum computing models and stresses the need for a comprehensive analysis of the frequency spectrum in both model construction and evaluation of privacy safeguards.

\subsubsection{Expectation Value Landscape Numerical Results}

In this section, we provide a numerical investigation of the impact of high-frequency components in the encoding circuit on the landscape of Eq~\ref{eqn:cost} for snapshot inversion. The idea is to present examples that move beyond the Fourier tower map.
We present two cases of encodings that would generally be difficult to simulate classically. By plotting a given expectation value against a univariate $x$ we can numerically investigate the frequencies produced by both models. 

\begin{figure}[h!]
    \centering
    \includegraphics[width=0.5\textwidth]{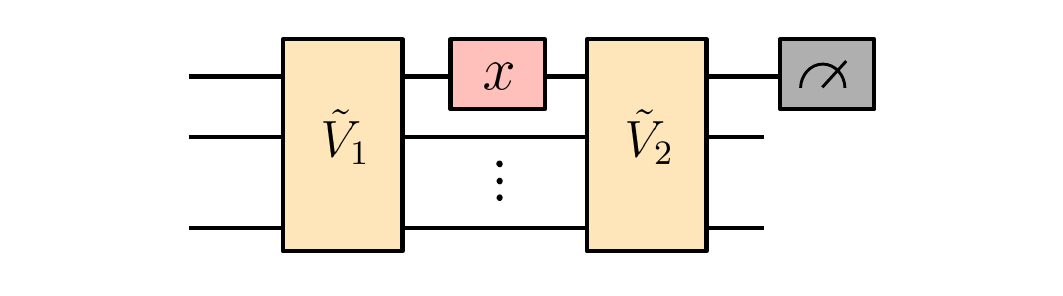}
    \caption{Encoding circuit diagram showing a single qubit $R_{\mathsf{X}}$ rotation gate parameterised by the univariate parameter $x$, but with arbitrary $2^n$ dimensional unitaries applied before and after the $x$ parameterized gate. Despite being hard to simulate analytically, the expectation value $e_{\text{in}}$ varies as a simple sinusoidal function in $x$, regardless of the total number of qubits $n$.}
    \label{fig:single-qsim-scaling}
\end{figure}

In Figure~\ref{fig:single-qsim-scaling} we demonstrate a circuit in which $x$ parameterizes a single $R_{\mathsf{X}}$ rotation gate, but on either side of this is an unknown arbitrary unitary matrix acting on $n$ qubits. This would be classically hard to simulate due to the arbitrary unitary matrices; however, the result effectively corresponds to taking measurements on an unknown basis, and using only a few samples of $x$ it is possible to recreate the graph as a single frequency sinusoidal relationship. This results in the distance between the stationary points being $r=\pi$ for any value of $n$. This corresponds to a frequency $\omega = \frac{r}{\pi} = 1$, regardless of the value of $n$. This circuit therefore exhibits constant frequency scaling independent of $n$ and hence could be easy for gradient-based methods to recovery approximate local min. 

\begin{figure}[h!]
    \centering
    \includegraphics[width=0.4\textwidth]{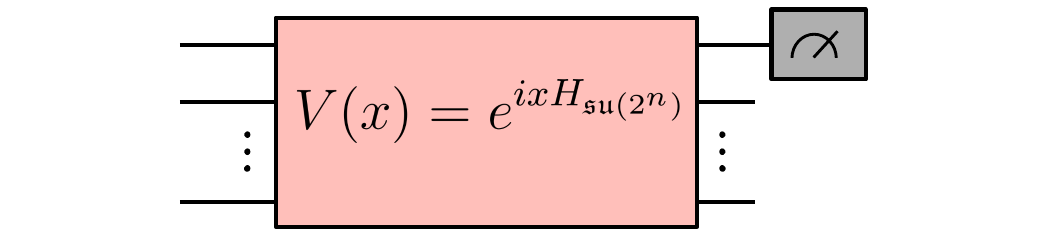}
    \caption{Encoding circuit diagram showing a $\text{SU}(2^n)$ gate parameterised by a univariate parameter $x$. }
    \label{fig:sun-qsim-scaling}
\end{figure}

\begin{figure}[h!]
    \centering
    \begin{subfigure}{0.4\textwidth}
        \includegraphics[width=\textwidth]{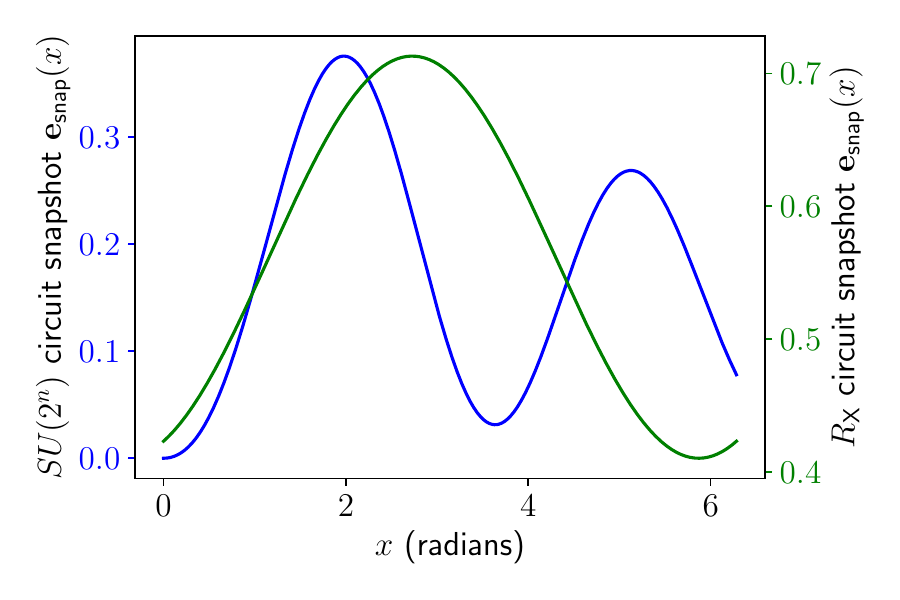}
        \caption{Number of qubits $n = 2$.}
    \end{subfigure}
    \begin{subfigure}{0.4\textwidth}
        \includegraphics[width=\textwidth]{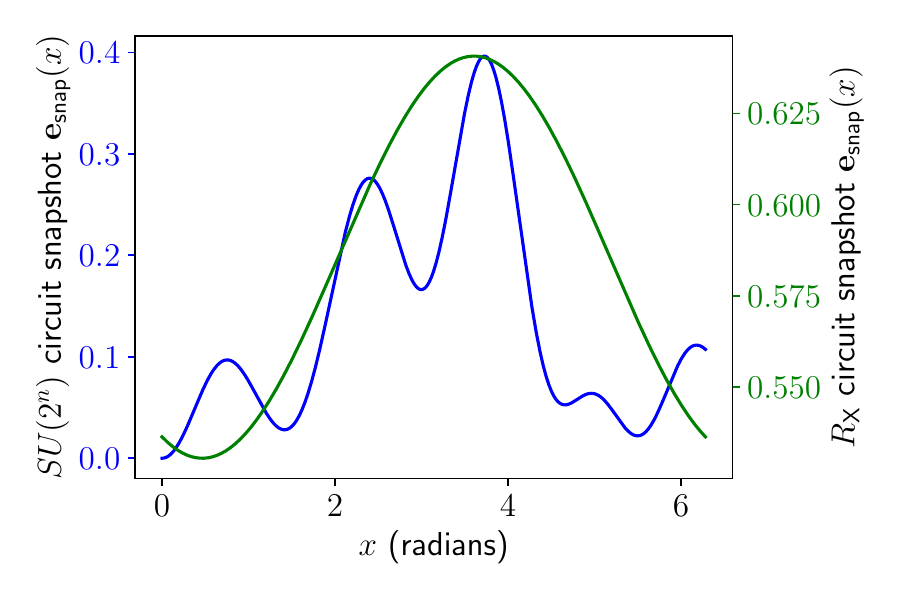}
        \caption{Number of qubits $n = 3$.}
    \end{subfigure}
    \begin{subfigure}{0.4\textwidth}
        \includegraphics[width=\textwidth]{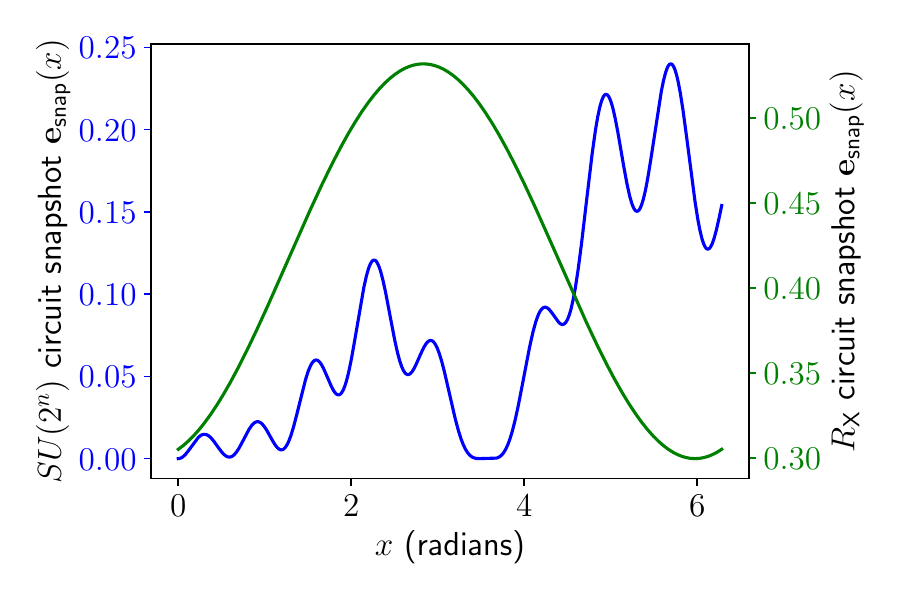}
        \caption{Number of qubits $n = 4$.}
    \end{subfigure}   
    \begin{subfigure}{0.4\textwidth}
        \includegraphics[width=\textwidth]{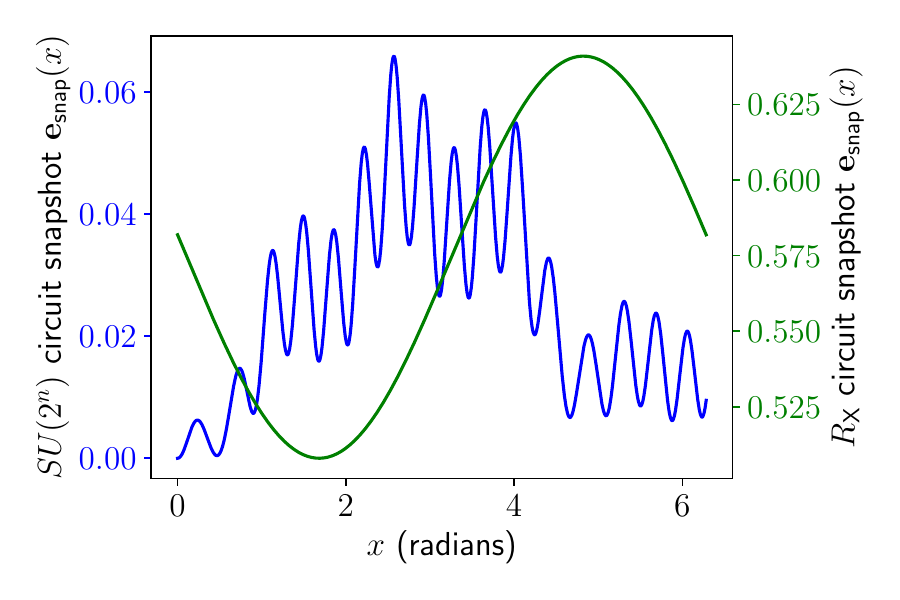}
        \caption{Number of qubits $n = 5$.}
    \end{subfigure}   
    \caption{Comparison of how expectation value of the measurement of $Z_1$ varies with $x$ for both the model parameterized using a single $R_{\mathsf{X}}$ rotation gate as detailed in Figure~\ref{fig:single-qsim-scaling} and the model parameterized using an $SU(2^n)$ gate as detailed in Figure~\ref{fig:sun-qsim-scaling}, for varying amounts of qubits.}
    \label{fig:expectation-comparison}
\end{figure}

\begin{figure}[h!]
    \centering
    \includegraphics[scale=0.65]{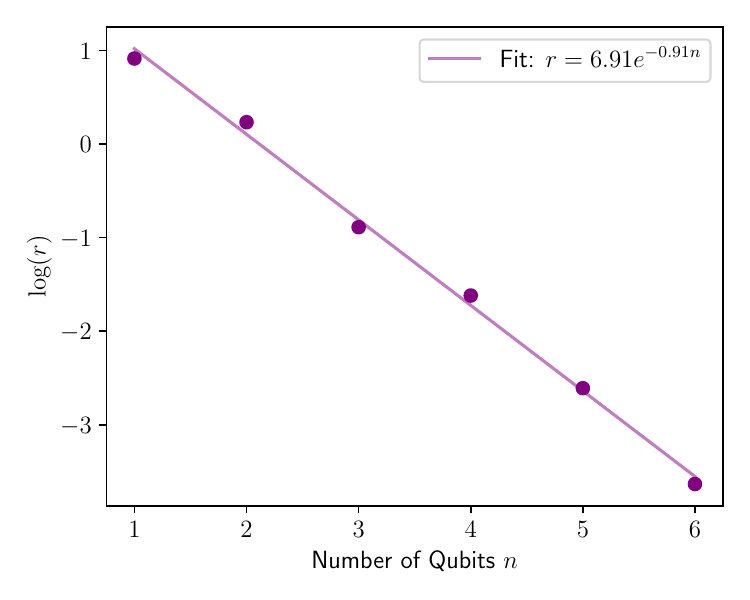}
    \caption{Plot showing the relationship between the average minimum distance $r$ between stationary points of the expectation value $\mathsf{Z}\otimes\mathbb{I}^{\otimes n-1}$ as a function of a univariate $x$ input. The encoding circuit considered is a parameterized $\text{SU}(2^n)$ gate which is parameterized by a univariate input $x$ as $U = e^{i H x}$, where $H$ is a randomly generated Hermitian matrix. The average was taken over ten repeated experiments where $H$ was regenerated each time.}
    \label{fig:iqp-scaling}
\end{figure}

We briefly give an example of a type of circuit that can generate high-frequency expectation values. Figure~\ref{fig:sun-qsim-scaling} demonstrates a circuit where $x$ parameterizes an $\text{SU}(2^n)$ gate. The result when measuring the same expectation value corresponds to a highest frequency term that is exponentially increasing. This is shown in the plot in Figure~\ref{fig:sun-qsim-scaling} in which the distance between stationary points $r$ shrinks exponentially as the number of qubits increases for the $\text{SU}(2^n)$ parameterized model, which roughly corresponds to an exponentially increasing highest frequency term. A comparison between the expectation value landscape of the two different encoding architectures, is shown in Figure~\ref{fig:expectation-comparison}, demonstrating that the single rotation gate parameterization, as shown in Figure~\ref{fig:single-qsim-scaling}, produces a sinusoidal single-frequency distribution, even as the number of qubits is increased; while the $\text{SU}(2^n)$ gate parameterization, shown in Figure~\ref{fig:sun-qsim-scaling}, contains exponentially increasing frequency terms. A visual representation for the multivariate case is also demonstrated in Figure~\ref{fig:expectation-comparison-3d} which shows the expectation value landscape when two input parameters are adjusted, for a model comprised of two different $\text{SU}(2^n)$ parameterized gates parameterized by the variables $x_1$ and $x_2$ respectively, demonstrating that as more qubits are used, the frequencies of the model increase and hence so does the difficulty of finding a solution using gradient descent techniques.

The two example circuits demonstrate encoding circuits that are hard to simulate, and hence no analytical expression for the expectation values can be easily found. These models do not admit classical snapshot inversion; however, by sampling expectation values it may be possible to variationally perform quantum-assisted snapshot inversion. Whether numerical snapshot inversion can be performed efficiently will likely be affected by the highest frequency $\omega$ inherent in the encoding, which will itself depend on the architecture of the encoding circuit. This suggests that designing encoding circuits such that they contain high-frequency components is beneficial in high-privacy designs. We have shown that $\text{SU}(2^n)$ parameterized gates can produce high-frequency terms, whereas single-qubit encoding gates will be severely limited in the frequencies they produce.

\begin{figure}[h!]
    \centering
    \begin{subfigure}{0.3\textwidth}
        \includegraphics[width=\textwidth]{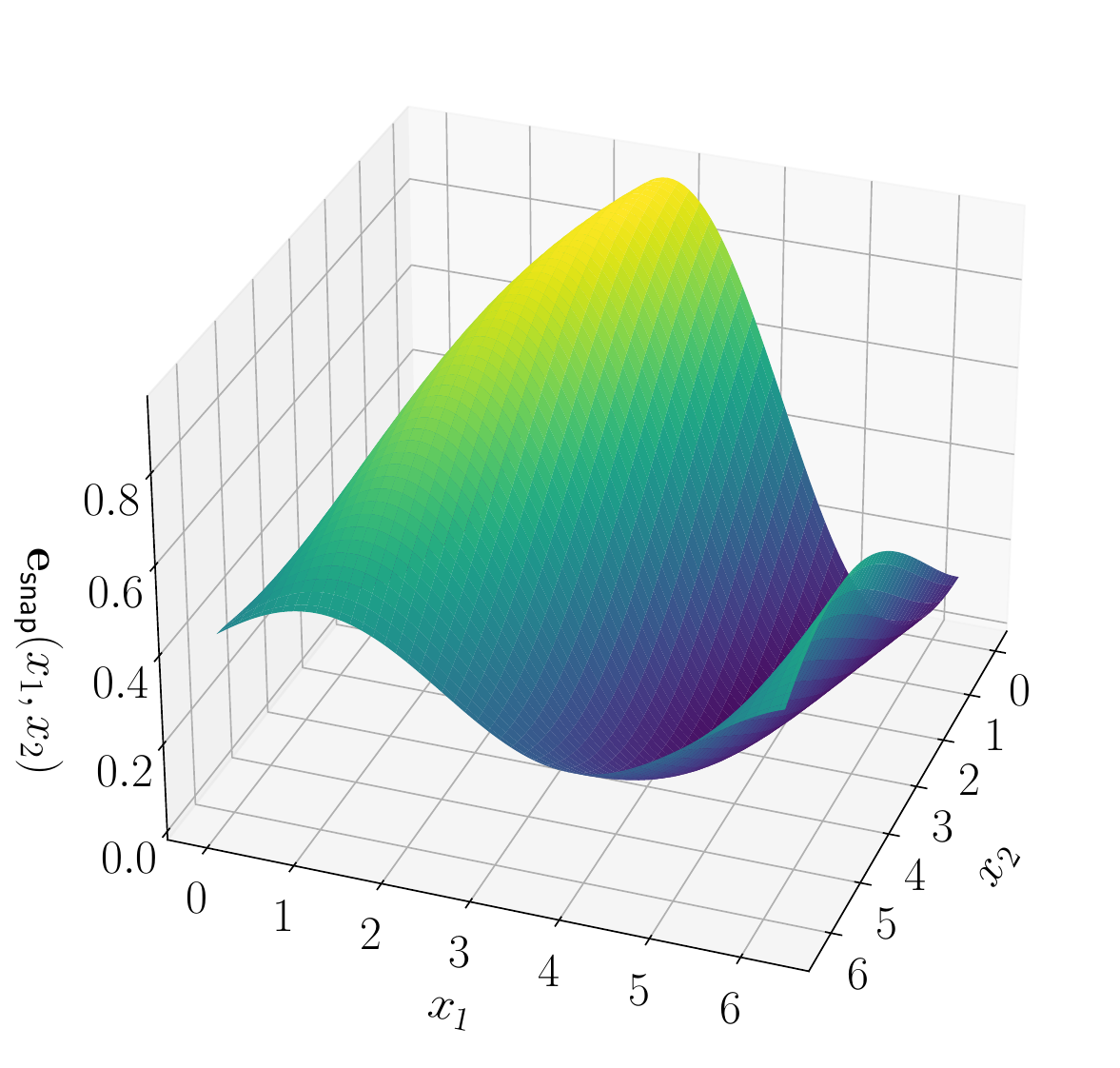}
        \caption{Number of qubits $n = 1$.}
    \end{subfigure}
       \begin{subfigure}{0.3\textwidth}
        \includegraphics[width=\textwidth]{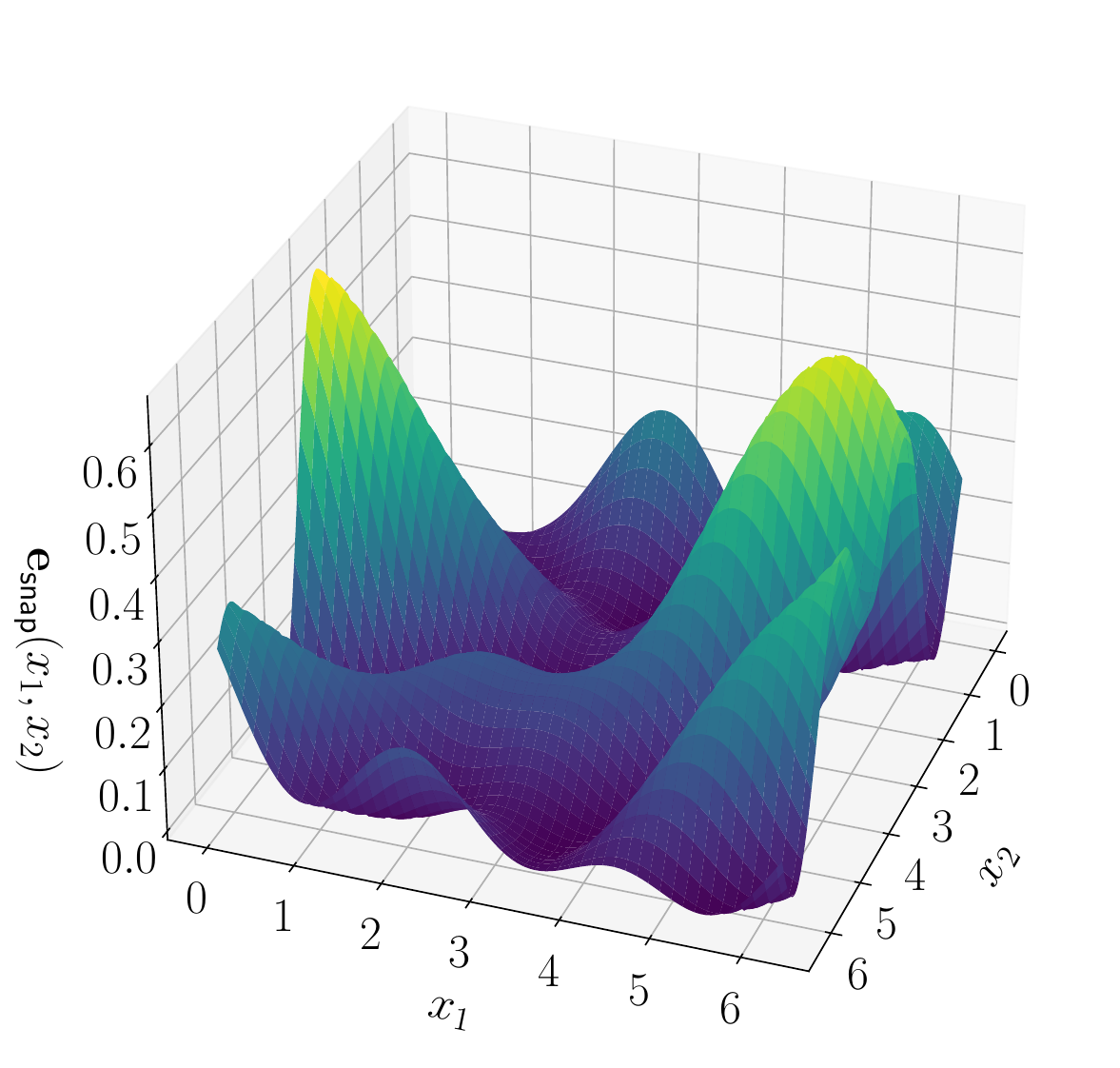}
        \caption{Number of qubits $n = 2$.}
    \end{subfigure}
    \begin{subfigure}{0.3\textwidth}
        \includegraphics[width=\textwidth]{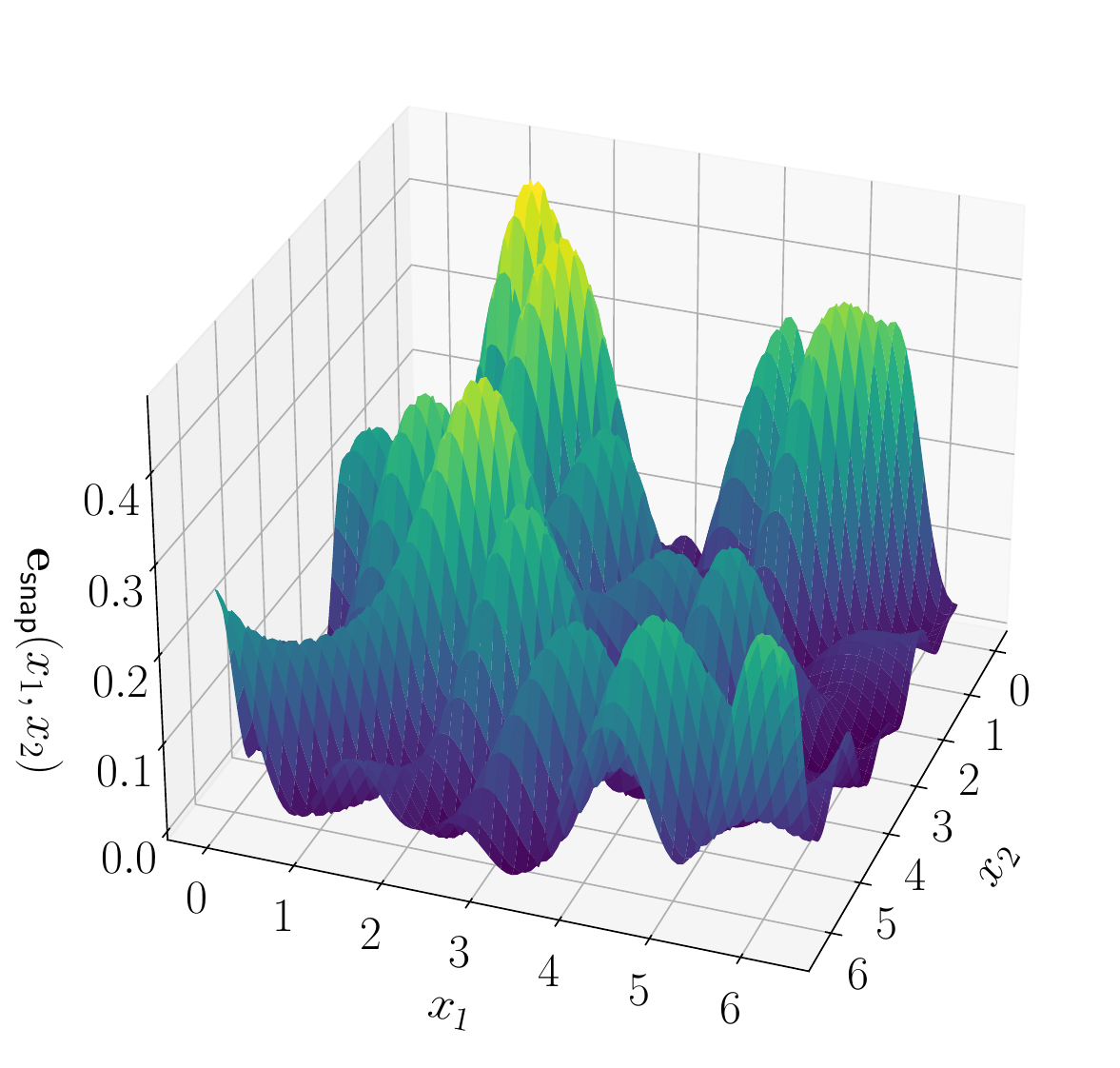}
        \caption{Number of qubits $n = 3$.}
    \end{subfigure}
    \caption{Comparison how expectation value of the measurement of $Z_1$ varies with $x_1$ and $x_2$ for a model parameterized by the circuit $v(x) = e^{i H_2 x_2} e^{i H_1 x_1}$, where $H_1$ and $H_2$ are randomly chosen Hermitian matrices of dimension $2^n$, where $n$ is the number of qubits in the circuit.}
    \label{fig:expectation-comparison-3d}
\end{figure}
\section{Discussion}

We utilize this section to draw the connections between the two key properties of VQC: \emph{trainability}, i.e., the lack of barren plateaus \cite{fontana2023adjoint, Ragone:2023qbn}, and the ability to retain privacy of input. Building upon this connection, we discuss the prospects and future of achieving robust privacy guarantees with VQC models. 

\subsection{Connections between Trainability and Privacy in VQC} \label{sec:train_privacy}

Solely requiring a machine learning model to be private is not sufficient to deploy it for a practical use case of distributed learning such as federated learning. A key requirement in this collaborative learning scenario is also to ensure that the model remains trainable. A plethora of works have gone into exactly characterizing the trainability of VQC models by analyzing the presence of barren plateau in the VQC model, starting from the work of \cite{mcclean2018barren} and culminating in the works of \cite{fontana2023adjoint, Ragone:2023qbn}. Especially, the work of \cite{fontana2023adjoint} provides an exact expression of the variance of the gradient of the model when the VQC is constrained to be in the LASA case, the details of which we also provide in Appendix~\ref{app:trainability} for completeness. A key insight into these works suggests that LASA models, with exponentially-sized DLA, may lead to the presence of barren plateaus (see Theorem \ref{thm:gradient thm} under Appendix~\ref{app:trainability}), drastically deteriorating the trainability of such models. 

Within our privacy framework centered around snapshot recoverability (Sec\ref{sec:sanpshot_recovery}), we also show via Theorem~\ref{thm:snaprecoveryscaling} that LASA models with an exponential size DLA are not classically snapshot recoverable, although this may lead to untrainable models. We can therefore conclude that a possible condition for protection against classical input recovery using gradients in a VQC model is to choose an ansatz that exhibits an exponentially large dynamical Lie algebra dimension, as this would render snapshot recovery difficult. Through our framework, we can see that previous works \cite{kumar2023expressive} effectively relied on this property to ensure privacy. Combining the concept of trainability leads to the following corollary on the privacy of VQC models:

\begin{corollary}\label{col:trainability-privacy}
Any trainable VQC on $n$ qubits that satisfies the LASA condition in Def~\ref{def:lasa}, fulfills the slow Pauli expansion condition as highlighted in Def~\ref{def:slow_pe}, and has a DLA $\g$ whose dimension scales as $\mathcal{O}(\text{poly}(n))$, would admit snapshot recoverability with complexity $\mathcal{O}(\text{poly}(n))$.

\end{corollary}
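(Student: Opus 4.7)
The plan is to reduce the claim to Theorem~\ref{thm:snaprecoveryscaling}. The three structural hypotheses of the corollary---LASA, slow Pauli expansion, and $\dim(\g) = \mathcal{O}(\text{poly}(n))$---coincide exactly with the preconditions of Algorithm~\ref{alg:snapshot_recov}. The one remaining ingredient that Theorem~\ref{thm:snaprecoveryscaling} needs beyond these is the parameter-count bound $D \geq \dim(\g)$, which is what makes the linear system in Eq~\ref{eqn:matirx-to-invert} determined, and this is precisely the step where the trainability hypothesis must be used.

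First I would establish the bridging condition $\dim(\g) \leq D = \mathcal{O}(\text{poly}(n))$ from trainability. On one hand, any meaningfully trainable model must have $D = \mathcal{O}(\text{poly}(n))$, since otherwise even storing or updating the parameter vector $\boldsymbol{\theta}$ over the course of optimization would itself be infeasible; this matches the $\text{poly}(n)$-time attacker regime fixed in Sec~\ref{sec:input_recover}. On the other hand, within the LASA regime, the standard route to trainability---absence of spurious local minima and concentration of minima around global optima, following the overparameterization analysis of \cite{larocca2022diagnosing} and \cite{anschuetz2022} and the variance characterization summarized in Appendix~\ref{app:trainability}---requires the ansatz family $\{\mathbf{U}(\boldsymbol{\theta})\}$ to reach (a dense subgroup of) the dynamical Lie group $\mathcal{G} = e^{\g}$, which forces $D \geq \dim(\g)$. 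Combined with the polynomial DLA hypothesis, this yields $\dim(\g) \leq D = \mathcal{O}(\text{poly}(n))$.

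Given this bridging condition, Theorem~\ref{thm:snaprecoveryscaling} now applies verbatim. Algorithm~\ref{alg:dla_basis} returns a Frobenius-orthonormal basis $\{\mathbf{B}_\alpha\}$ of $\g$ in time $\mathcal{O}(\text{poly}(\dim(\g)))$, where slow Pauli expansion is exactly what prevents the iterated commutators from blowing up to exponentially many Pauli strings. The adjoint matrices $\text{ad}_{i\mathbf{H}_k}$, their exponentials, and the derivatives $\partial \text{Ad}_{\mathbf{U}}/\partial \theta_j$ are all $\dim(\g) \times \dim(\g)$ matrices assembled in $\mathcal{O}(\text{poly}(\dim(\g)))$ arithmetic operations, and the resulting $D \times \dim(\g)$ linear system for $\mathbf{e}_{\text{snap}}$ is solved by Gaussian elimination in $\mathcal{O}(\dim(\g)^3)$ time. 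Since $D$ and $\dim(\g)$ are both $\mathcal{O}(\text{poly}(n))$, the total cost collapses to $\mathcal{O}(\text{poly}(n))$, which is the desired complexity.

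The main obstacle is making the step from \emph{trainable} to $D \geq \dim(\g)$ genuinely precise. The word ``trainable'' is used informally here and admits multiple strengths: non-vanishing gradient variance alone (the condition of \cite{fontana2023adjoint,Ragone:2023qbn}) does not immediately enforce overparameterization up to $\dim(\g)$, and in principle one could imagine a LASA circuit with $D \ll \dim(\g)$ whose gradients are still not exponentially concentrated. A clean proof should therefore either commit to a specific notion of trainability---most naturally the overparameterization threshold of \cite{larocca2022diagnosing} or the no-spurious-minima criterion of \cite{anschuetz2022}, both of which explicitly give $D \geq \dim(\g)$---or restate the corollary with the parameter-count bound as an explicit premise alongside trainability, so that the reduction to Theorem~\ref{thm:snaprecoveryscaling} is unambiguous.
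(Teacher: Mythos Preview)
Your proposal is correct and mirrors the paper's own reasoning: the paper does not give a formal proof of the corollary but presents it as an immediate consequence of Theorem~\ref{thm:snaprecoveryscaling} together with the trainability/overparameterization discussion (Appendix~\ref{app:trainability}), which is exactly the reduction you carry out. Your identification of the soft spot---that ``trainable'' must be read as overparameterized in the sense of \cite{larocca2023theory, anschuetz2022} to force $D \geq \dim(\g)$---is accurate and is the same informal step the paper relies on without making it fully explicit.
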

Hence we can conclude that, at least in the LASA case of VQC, the privacy of the model is linked to the DLA dimension and furthermore that there is a direct tradeoff between privacy and trainability of the model. As exponentially sized DLA models are expected to be untrainable in the LASA case, this means that for realistic applications, it does not seem feasible to rely on quantum privacy derived from an exponential DLA precluding snapshot recoverability in the model. This suggests that any privacy enhancement from quantum VQCs should not derive itself from the variational part of the circuit for LASA type models that are intended to be trainable. In other words, we expect the majority of trainable VQC models to be vulnerable to weak privacy breaches.  The privacy of variational models beyond the LASA case becomes linked to a larger question within the field, notably, whether there exist quantum variational models that are not classically simulatable and do not have barren plateaus \cite{cerezo2023does}.

It is also worth noting that if one attempted to create a model that is not snapshot recoverable by ensuring that  $D < {\text{dim}(\mathfrak{g})}$, and hence an underparameterized system of equations, it would effectively lead to an underparameterized model. A model is underparmeterized when there are not enough variational parameters to fully explore the space generated by the DLA of the ansatz, which is a property that may not be desirable for machine learning models \cite{Avron2020}.

\subsection{Future Direction of VQC Quantum Privacy} \label{sec:future_dir}

Due to the above argument suggesting that achieving privacy via an exponentially large DLA may cause trainability issues in the underlying model, it appears that future improvements in privacy using VQC may primarily focus on preventing the snapshot inversion step, as we highlight in Sec~\ref{sec:input_recover}. This promotes a focus on the encoding circuit architectures of the VQC in order to prevent the model admitting snapshot inversion to facilitate input recovery.

We have explicitly shown the necessary condition required to achieve privacy from purely classical attacks. If it is not possible to classically simulate the expectation values of the quantum encoded state with respect to the DLA basis elements of the variational circuit, then it will not be possible to attempt classical analytical or numerical inversion attacks. Any VQC designed where these expectation values cannot be simulated will, therefore, be protected against any purely classical snapshot inversion attempts. This condition can therefore prevent strong privacy breaches, as long as the attacking agent only has access to a classical device.

In the case where the attacker can simulate expectation values of the DLA basis or has access to a quantum device to obtain the expectation values, then numerical classical snapshot inversion or numerical quantum-assisted snapshot inversion can be attempted, respectively. We have shown that in this case an important factor in preventing these techniques is that the expectation values have exponentially scaling frequency terms resulting in the attacks requiring to solve a system of high-degree polynomial equations. The implication of this is that to achieve useful privacy advantage in VQC  it may require that the encoding circuit is constructed in such a way that the expectation values of the DLA basis elements of the variational circuit contain frequency terms that scale exponentially. Notably, we find that having high frequency terms in the gradients, as suggested in the encoding circuit of \cite{kumar2023expressive}, does not necessarily protect against numerical snapshot inversion attacks. This is because the gradients inherit the highest frequency term from all expectation values, but there may be a sufficient number of polynomial frequency expectation values to perform snapshot inversion, even if direct input inversion is not possible.

Unlike the variational case, where a connection between DLA dimension and trainability has been established, the effect that privacy-enhancing quantum encodings would have on the trainability of a model is less clear. If the majority of expectation values used in model contain exponentially large frequencies, then this potentially restricts the model to certain datasets. In classical machine learning, there have been positive results using trigonometric feature maps to classify high frequency data in low dimensions \cite{tancik2020fourier}. It remains as a question for future research the types of data which may be trained appropriately using the privacy-preserving high frequency feature maps proposed. If models of this form are indeed limited in number, then the prospects for achieving input privacy from VQC models appear to be limited.  More generally the prospect for quantum privacy rests on feature maps that are \emph{untrainable} with regard to adjusting $\mathbf{x}'$ to recover expectation values $\mathbf{e}_{\text{snap}}$, while at the same time remaining useful feature maps with respect to the underlying dataset and overall model.

\section{Conclusion} \label{sec:conclusion}

In this research, we conducted a detailed exploration of the privacy safeguards inherent in VQC models regarding the recovery of original input data from observed gradient information. Our primary objective was to develop a systematic framework capable of assessing the vulnerability of these quantum models to general class of inversion attacks, specifically through introducing the snapshot recovery and snapshot inversion attack techniques, which primarily depend on the variational and encoding architectures, respectively.

Our analysis began by establishing the feasibility of recovering snapshot expectation values from the model gradients under the LASA assumption. We demonstrated that such recovery is viable when the Lie algebra dimension of the variational circuit exhibits polynomial scaling in the number of qubits. This result underscores the importance of algebraic structure in determining the potential for privacy breaches in quantum computational models. Furthermore, due to the fact that a polynomial scaling DLA dimension is commonly required for models to be trainable, our results suggest that a trade-off may exist between privacy and trainability of VQC models. Assuming one insists on a polynomial-sized DLA, our framework suggests that a weak privacy breach will always be possible for the type of VQC model studied. To ensure privacy of the model overall, one cannot rely on the variational circuit and needs to instead focus more on the encoding architecture and ensuring snapshot inversion cannot be performed. If snapshot inversion is not possible, then at least strong privacy breaches can be prevented.

We then explored snapshot inversion, where the task is to find the original input from the snapshot expectation values, effectively inverting the encoding procedure. Studying widely used encoding ansatz, such as the local multiqubit Pauli encoding, we found that under the conditions that a fixed subset of the data paramaterizes a constituent state which has sufficient overlap with the DLA, and the number of gates used to encode each dimension of the input $\mathbf{x}$ was polynomial, that snapshot inversion was possible in $\mathcal{O}(\text{poly}(n, \log(1/\epsilon))$ time. This shows that a potentially wide range of encoding circuits are vulnerable to strong privacy breaches and brings their usage in privacy-focused models into question. For the most general encoding, which we approached as a black-box optimization problem, we demonstrated that using perturbed gradient descent to find a solution is constrained by the frequency terms within the expectation value Fourier spectrum. In general for exactly finding $\mathbf{x}$ it appears that a grid search would be required. Although we cannot provide strict sufficient conditions due to the possibility of unfavorable local minima with perturbed gradient, we note that gradient descent for snapshot inversion may, in some cases, be easier to perform for snapshot inversion than for direct input data recovery from the gradients. This simplification arises because gradients can inherit the highest frequency term from the snapshots, potentially leading to scenarios where the gradient term contains exponentially large frequencies. However, there may still be sufficient polynomial frequency snapshots to permit snapshot inversion. This shifts the focus in attack models away from direct input recovery from gradients, a common approach in classical privacy analysis, towards performing snapshot inversion as detailed in this study as a potentially more efficient attack method.

The dual investigation allowed us to construct a robust evaluative framework that not only facilitates the assessment of existing VQC models for privacy vulnerabilities but also aids in the conceptualization and development of new models where privacy is a critical concern. Our reevaluation of previous studies, such as those cited in \cite{kumar2023expressive}, through the lens of our new framework, reveals that the privacy mechanisms employed, namely the utilization of high-frequency components and exponentially large DLA, effectively prevent input data recovery via a lack of snapshot recoverability, but at the same time contribute to an untrainable model of limited practical use.

In conclusion, we offer a methodological approach for classifying and analyzing the privacy features of VQC models, presenting conditions for weak and strong privacy breaches for a broad spectrum of possible VQC architectures. Our findings not only enhance the understanding of quantum privacy mechanisms but also offer strategic guidelines for the design of quantum circuits that prioritize security while at the same time maintaining trainability. Looking ahead, this research paves the way for more robust quantum machine learning model designs, where privacy and functionality are balanced. This knowledge offers the potential to deliver effective machine learning models that simultaneously demonstrate a privacy advantage over conventional classical methods.

\section*{Acknowledgments}
The authors thank Brandon Augustino, Raymond Rudy Putra and the rest of our colleagues at the Global Technology Applied Research Center of JPMorgan Chase for helpful comments and discussions.

\section*{Disclaimer}

This paper was prepared for informational purposes by the Global Technology Applied Research center of JPMorgan Chase $\&$ Co. This paper is not a product of the Research Department of JPMorgan Chase $\&$ Co. or its affiliates. Neither JPMorgan Chase $\&$ Co. nor any of its affiliates makes any explicit or implied representation or warranty and none of them accept any liability in connection with this paper, including, without limitation, with respect to the completeness, accuracy, or reliability of the information contained herein and the potential legal, compliance, tax, or accounting effects thereof. This document is not intended as investment research or investment advice, or as a recommendation, offer, or solicitation for the purchase or sale of any security, financial instrument, financial product or service, or to be used in any way for evaluating the merits of participating in any transaction.

\bibliographystyle{ieeetr}
\bibliography{references}

\begin{thebibliography}{10}

\bibitem{liu2023efficient}
T.~Liu, Z.~Wang, H.~He, W.~Shi, L.~Lin, W.~Shi, R.~An, and C.~Li, ``Efficient and secure federated learning for financial applications,'' 2023.

\bibitem{awosika2023transparency}
T.~Awosika, R.~M. Shukla, and B.~Pranggono, ``Transparency and privacy: The role of explainable ai and federated learning in financial fraud detection,'' 2023.

\bibitem{Kaissis2020SecurePA}
G.~Kaissis, M.~R. Makowski, D.~R{\"u}ckert, and R.~F. Braren, ``Secure, privacy-preserving and federated machine learning in medical imaging,'' {\em Nature Machine Intelligence}, vol.~2, pp.~305 -- 311, 2020.

\bibitem{Ahamed2023}
S.~Ahamed, N.~Nishant, A.~Selvaraj, N.~Gandhewar, S.~A, and B.~K, ``Investigating privacy-preserving machine learning for healthcare data sharing through federated learning,'' {\em The Scientific Temper}, vol.~14, pp.~1308--1315, 12 2023.

\bibitem{Aguiar2023}
E.~Aguiar, C.~Jr, and A.~Traina, ``Security and privacy in machine learning for health systems: Strategies and challenges,'' {\em Yearbook of Medical Informatics}, vol.~32, pp.~269--281, 12 2023.

\bibitem{park2023fedgeo}
C.~Park, T.~Choi, T.~Kim, M.~Cho, J.~Hong, M.~Choi, and J.~Choo, ``Fedgeo: Privacy-preserving user next location prediction with federated learning,'' 2023.

\bibitem{Albrecht2016HowTG}
J.~P. Albrecht, ``How the gdpr will change the world,'' {\em European Data Protection Law Review}, vol.~2, pp.~287--289, 2016.

\bibitem{Brauneck2023}
A.~Brauneck, L.~Schmalhorst, M.~Majdabadi, M.~Bakhtiari, U.~Völker, C.~Saak, J.~Baumbach, L.~Baumbach, and G.~Buchholtz, ``Federated machine learning in data-protection-compliant research,'' {\em Nature Machine Intelligence}, vol.~5, pp.~2--4, 01 2023.

\bibitem{Tong2022}
J.~Tong, C.~Luo, M.~Islam, N.~Sheils, J.~Buresh, M.~Edmondson, P.~Merkel, E.~Lautenbach, R.~Duan, and Y.~Chen, ``Distributed learning for heterogeneous clinical data with application to integrating covid-19 data across 230 sites,'' {\em npj Digital Medicine}, vol.~5, p.~76, 06 2022.

\bibitem{mcmahan2023communicationefficient}
H.~B. McMahan, E.~Moore, D.~Ramage, S.~Hampson, and B.~A. y~Arcas, ``Communication-efficient learning of deep networks from decentralized data,'' 2023.

\bibitem{Zhu19}
L.~Zhu, Z.~Liu, and S.~Han, ``Deep leakage from gradients,'' in {\em Advances in Neural Information Processing Systems}, vol.~32, Curran Associates, Inc., 2019.

\bibitem{huang2021evaluating}
Y.~Huang, S.~Gupta, Z.~Song, K.~Li, and S.~Arora, ``Evaluating gradient inversion attacks and defenses in federated learning,'' in {\em Advances in Neural Information Processing Systems} (A.~Beygelzimer, Y.~Dauphin, P.~Liang, and J.~W. Vaughan, eds.), 2021.

\bibitem{zhao2020idlg}
B.~Zhao, K.~R. Mopuri, and H.~Bilen, ``idlg: Improved deep leakage from gradients,'' {\em arXiv preprint arXiv:2001.02610}, 2020.

\bibitem{GeipingBD020}
J.~Geiping, H.~Bauermeister, H.~Dr{\"{o}}ge, and M.~Moeller, ``{Inverting Gradients - How easy is it to break privacy in federated learning?},'' in {\em Advances in Neural Information Processing Systems 33: Annual Conference on Neural Information Processing Systems 2020, NeurIPS 2020, December 6-12, 2020, virtual}, 2020.

\bibitem{Yin21}
H.~Yin, A.~Mallya, A.~Vahdat, J.~Alvarez, J.~Kautz, and P.~Molchanov, ``{See through Gradients: Image Batch Recovery via GradInversion},'' in {\em 2021 IEEE/CVF Conference on Computer Vision and Pattern Recognition (CVPR)}, pp.~16332--16341, 2021.

\bibitem{Phong2018}
L.~T. Phong, Y.~Aono, T.~Hayashi, L.~Wang, and S.~Moriai, ``Privacy-preserving deep learning via additively homomorphic encryption,'' {\em IEEE Transactions on Information Forensics and Security}, vol.~13, no.~5, pp.~1333--1345, 2018.

\bibitem{eloul2022enhancing}
S.~Eloul, F.~Silavong, S.~Kamthe, A.~Georgiadis, and S.~J. Moran, ``Enhancing privacy against inversion attacks in federated learning by using mixing gradients strategies,'' {\em arXiv preprint arXiv:2204.12495}, 2022.

\bibitem{Huang2022QuantumFL}
R.~Huang, X.~Tan, and Q.~Xu, ``Quantum federated learning with decentralized data,'' {\em IEEE Journal of Selected Topics in Quantum Electronics}, vol.~28, pp.~1--10, 2022.

\bibitem{Qi2023OptimizingQF}
J.~Qi, X.~Zhang, and J.~Tejedor, ``Optimizing quantum federated learning based on federated quantum natural gradient descent,'' {\em ICASSP 2023 - 2023 IEEE International Conference on Acoustics, Speech and Signal Processing (ICASSP)}, pp.~1--5, 2023.

\bibitem{Chehimi2021QuantumFL}
M.~Chehimi and W.~Saad, ``Quantum federated learning with quantum data,'' {\em ICASSP 2022 - 2022 IEEE International Conference on Acoustics, Speech and Signal Processing (ICASSP)}, pp.~8617--8621, 2021.

\bibitem{li2023blind}
C.~Li, B.~Li, O.~Amer, R.~Shaydulin, S.~Chakrabarti, G.~Wang, H.~Xu, H.~Tang, I.~Schoch, N.~Kumar, C.~Lim, J.~Li, P.~Cappellaro, and M.~Pistoia, ``Blind quantum machine learning with quantum bipartite correlator,'' 2023.

\bibitem{Gurung2023DecentralizedQF}
D.~Gurung, S.~R. Pokhrel, and G.~Li, ``Decentralized quantum federated learning for metaverse: Analysis, design and implementation,'' {\em ArXiv}, vol.~abs/2306.11297, 2023.

\bibitem{Lusnig_2024}
L.~Lusnig, A.~Sagingalieva, M.~Surmach, T.~Protasevich, O.~Michiu, J.~McLoughlin, C.~Mansell, G.~de’ Petris, D.~Bonazza, F.~Zanconati, A.~Melnikov, and F.~Cavalli, ``Hybrid quantum image classification and federated learning for hepatic steatosis diagnosis,'' {\em Diagnostics}, vol.~14, p.~558, Mar. 2024.

\bibitem{gilboa2023exponential}
D.~Gilboa and J.~R. McClean, ``Exponential quantum communication advantage in distributed learning,'' 2023.

\bibitem{li2024privacy}
C.~Li, N.~Kumar, Z.~Song, S.~Chakrabarti, and M.~Pistoia, ``Privacy-preserving quantum federated learning via gradient hiding,'' {\em Quantum Science and Technology}, vol.~9, no.~3, p.~035028, 2024.

\bibitem{koyasu2023distributed}
I.~Koyasu, R.~Raymond, and H.~Imai, ``Distributed coordinate descent algorithm for variational quantum classification,'' in {\em 2023 IEEE International Conference on Quantum Computing and Engineering (QCE)}, vol.~1, pp.~457--467, IEEE, 2023.

\bibitem{Li2021QuantumFL}
W.~Li, S.~Lu, and D.-L. Deng, ``Quantum federated learning through blind quantum computing,'' {\em Science China Physics, Mechanics \& Astronomy}, vol.~64, 2021.

\bibitem{kumar2023expressive}
N.~Kumar, J.~Heredge, C.~Li, S.~Eloul, S.~H. Sureshbabu, and M.~Pistoia, ``Expressive variational quantum circuits provide inherent privacy in federated learning,'' {\em arXiv preprint arXiv:2309.13002}, 2023.

\bibitem{chen2021}
S.~Y.-C. Chen and S.~Yoo, ``Federated quantum machine learning,'' {\em Entropy}, vol.~23, no.~4, 2021.

\bibitem{haah2024efficient}
J.~Haah, Y.~Liu, and X.~Tan, ``Efficient approximate unitary designs from random pauli rotations,'' {\em arXiv preprint arXiv:2402.05239}, 2024.

\bibitem{mcclean2018barren}
J.~R. McClean, S.~Boixo, V.~N. Smelyanskiy, R.~Babbush, and H.~Neven, ``Barren plateaus in quantum neural network training landscapes,'' {\em Nature communications}, vol.~9, no.~1, p.~4812, 2018.

\bibitem{anschuetz2022}
E.~Anschuetz and B.~Kiani, ``Quantum variational algorithms are swamped with traps,'' {\em Nature Communications}, vol.~13, p.~7760, 12 2022.

\bibitem{tancik2020fourier}
M.~Tancik, P.~P. Srinivasan, B.~Mildenhall, S.~Fridovich-Keil, N.~Raghavan, U.~Singhal, R.~Ramamoorthi, J.~T. Barron, and R.~Ng, ``Fourier features let networks learn high frequency functions in low dimensional domains,'' 2020.

\bibitem{fontana2023adjoint}
E.~Fontana, D.~Herman, S.~Chakrabarti, N.~Kumar, R.~Yalovetzky, J.~Heredge, S.~H. Sureshbabu, and M.~Pistoia, ``The adjoint is all you need: Characterizing barren plateaus in quantum ans$\backslash$" atze,'' {\em arXiv preprint arXiv:2309.07902}, 2023.

\bibitem{somma2004nature}
R.~Somma, G.~Ortiz, H.~Barnum, E.~Knill, and L.~Viola, ``Nature and measure of entanglement in quantum phase transitions,'' {\em Physical Review A}, vol.~70, no.~4, p.~042311, 2004.

\bibitem{goh2023lie}
M.~L. Goh, M.~Larocca, L.~Cincio, M.~Cerezo, and F.~Sauvage, ``Lie-algebraic classical simulations for variational quantum computing,'' {\em arXiv preprint arXiv:2308.01432}, 2023.

\bibitem{larocca2022diagnosing}
M.~Larocca, P.~Czarnik, K.~Sharma, G.~Muraleedharan, P.~J. Coles, and M.~Cerezo, ``Diagnosing barren plateaus with tools from quantum optimal control,'' {\em Quantum}, vol.~6, p.~824, 2022.

\bibitem{wang2020comprehensive}
Q.~Wang, Y.~Ma, K.~Zhao, and Y.~Tian, ``A comprehensive survey of loss functions in machine learning,'' {\em Annals of Data Science}, pp.~1--26, 2020.

\bibitem{ragone2022representation}
M.~Ragone, P.~Braccia, Q.~T. Nguyen, L.~Schatzki, P.~J. Coles, F.~Sauvage, M.~Larocca, and M.~Cerezo, ``Representation theory for geometric quantum machine learning,'' {\em arXiv preprint arXiv:2210.07980}, 2022.

\bibitem{grcar2011mathematicians}
J.~F. Grcar, ``Mathematicians of gaussian elimination,'' {\em Notices of the AMS}, vol.~58, no.~6, pp.~782--792, 2011.

\bibitem{Suzuki_2022}
R.~Suzuki, K.~Mitarai, and K.~Fujii, ``Computational power of one- and two-dimensional dual-unitary quantum circuits,'' {\em Quantum}, vol.~6, p.~631, Jan. 2022.

\bibitem{nocedal1999numerical}
J.~Nocedal and S.~J. Wright, {\em Numerical optimization}.
\newblock Springer, 1999.

\bibitem{jenkins1970three}
M.~A. Jenkins and J.~F. Traub, ``A three-stage variable-shift iteration for polynomial zeros and its relation to generalized rayleigh iteration,'' {\em Numerische Mathematik}, vol.~14, no.~3, pp.~252--263, 1970.

\bibitem{Wierichs2022generalparameter}
D.~Wierichs, J.~Izaac, C.~Wang, and C.~Y.-Y. Lin, ``General parameter-shift rules for quantum gradients,'' {\em {Quantum}}, vol.~6, p.~677, Mar. 2022.

\bibitem{buchberger1985}
B.~Buchberger, {\em Gröbner Bases: An Algorithmic Method in Polynomial Ideal Theory}, pp.~184--232.
\newblock Reidel Publishing Company, 01 1985.

\bibitem{dube1990grobner}
T.~W. Dub\'{e}, ``The structure of polynomial ideals and gröbner bases,'' {\em SIAM Journal on Computing}, vol.~19, no.~4, pp.~750--773, 1990.

\bibitem{MAYR1982305}
E.~W. Mayr and A.~R. Meyer, ``The complexity of the word problems for commutative semigroups and polynomial ideals,'' {\em Advances in Mathematics}, vol.~46, no.~3, pp.~305--329, 1982.

\bibitem{nesterov2018lectures}
Y.~Nesterov {\em et~al.}, {\em Lectures on convex optimization}, vol.~137.
\newblock Springer, 2018.

\bibitem{jin2017escape}
C.~Jin, R.~Ge, P.~Netrapalli, S.~M. Kakade, and M.~I. Jordan, ``How to escape saddle points efficiently,'' 2017.

\bibitem{Ragone:2023qbn}
M.~Ragone, B.~N. Bakalov, F.~Sauvage, A.~F. Kemper, C.~O. Marrero, M.~Larocca, and M.~Cerezo, ``{A Unified Theory of Barren Plateaus for Deep Parametrized Quantum Circuits},'' {\em {arXiv}}, 9 2023.

\bibitem{cerezo2023does}
M.~Cerezo, M.~Larocca, D.~Garc{\'\i}a-Mart{\'\i}n, N.~Diaz, P.~Braccia, E.~Fontana, M.~S. Rudolph, P.~Bermejo, A.~Ijaz, S.~Thanasilp, {\em et~al.}, ``Does provable absence of barren plateaus imply classical simulability? or, why we need to rethink variational quantum computing,'' {\em arXiv preprint arXiv:2312.09121}, 2023.

\bibitem{Avron2020}
N.~Shoham and H.~Avron, ``Experimental design for overparameterized learning with application to single shot deep active learning,'' {\em IEEE Transactions on Pattern Analysis and Machine Intelligence}, vol.~45, no.~10, pp.~11766--11777, 2023.

\bibitem{Schreiber_surrogates}
F.~J. Schreiber, J.~Eisert, and J.~J. Meyer, ``Classical surrogates for quantum learning models,'' {\em Physical Review Letters}, vol.~131, Sept. 2023.

\bibitem{sweke2023potential}
R.~Sweke, E.~Recio, S.~Jerbi, E.~Gil-Fuster, B.~Fuller, J.~Eisert, and J.~J. Meyer, ``Potential and limitations of random fourier features for dequantizing quantum machine learning,'' 2023.

\bibitem{Schuld_2021}
M.~Schuld, R.~Sweke, and J.~J. Meyer, ``Effect of data encoding on the expressive power of variational quantum-machine-learning models,'' {\em Physical Review A}, vol.~103, Mar. 2021.

\bibitem{landman2022classically}
J.~Landman, S.~Thabet, C.~Dalyac, H.~Mhiri, and E.~Kashefi, ``Classically approximating variational quantum machine learning with random fourier features,'' {\em arXiv preprint arXiv:2210.13200}, 2022.

\bibitem{larocca2024review}
M.~Larocca, S.~Thanasilp, S.~Wang, K.~Sharma, J.~Biamonte, P.~J. Coles, L.~Cincio, J.~R. McClean, Z.~Holmes, and M.~Cerezo, ``A review of barren plateaus in variational quantum computing,'' 2024.

\bibitem{ragone2023unified}
M.~Ragone, B.~N. Bakalov, F.~Sauvage, A.~F. Kemper, C.~O. Marrero, M.~Larocca, and M.~Cerezo, ``A unified theory of barren plateaus for deep parametrized quantum circuits,'' {\em arXiv preprint arXiv:2309.09342}, 2023.

\bibitem{larocca2023theory}
M.~Larocca, N.~Ju, D.~Garc{\'\i}a-Mart{\'\i}n, P.~J. Coles, and M.~Cerezo, ``Theory of overparametrization in quantum neural networks,'' {\em Nature Computational Science}, vol.~3, no.~6, pp.~542--551, 2023.

\end{thebibliography}

\clearpage
\onecolumngrid
\appendix

\section{Notation used in the work} \label{sec:notations}

\begin{table}[H]
\centering
\begin{tabular}{cl}
\hline
\textbf{Symbol} & \textbf{Meaning} \\
\hline
 $\mathbf{x}$ & Data Input \\
 $\mathbf{\theta}$ & Variational Parameters \\
 $y_{\mathbf{\theta}}(\mathbf{x})$ & Model Output\\
 $C_j$ & Gradient of Model Output wrt $\theta_j$ \\
$n$ & Total number of qubits \\
$d$ & Dimension of input vector\\
$D$ & Number of variational parameters \\
$V(\mathbf{x})$ & Encoding Circuit \\
$\mathbf{U}(\mathbf{\theta}) $ & Variational Circuit \\
$O$ & Measurement Operator in VQC Model \\
$\mu_\alpha$ & Coefficients of $O$ when written in terms of supported DLA basis terms\\
$\mathfrak{g}$ & Dynamical Lie algebra \\
$\Omega$ & Maximum number of sub-register partitions\\
$\mathbf{B}_k$ & Basis of DLA \\
$B_k$ & Portion of basis DLA basis element that only acts on a given sub-register \\
$H_k$ & Hermitian gate generators of the encoding map circuit \\
$m$ & Number of qubits in sub-register \\
$\rho_J(\mathsf{x}_J)$& Constituent state in a separable state\\
$\Omega$ & Maximum number of constituent states in a separable state.\\
$A_k$ & Element of constituent separable state density matrix sum expansion\\
$\lambda$ & Terms when expanding constituent density matrix $\rho_J(\mathsf{x}_J)$\\
$P$ & Diameter of sphere of possible $\mathbf{x}$ values in $\mathbb{R}^d$ \\ 
$\epsilon$ & User defined acceptable error\\
\hline
\end{tabular}
\caption{Table of Notation}
\label{tab:notation}
\end{table}

\section{General Cost Function}\label{app:loss-function}

In this section we go into more detail regarding how the choice of cost function would affect the attack procedure. In practical real-world examples, we would most likely have gradients with respect to some cost function $\frac{\partial\texttt{Cost}(y_\theta, y)}{\partial \theta}$ rather than gradients in the form $C_j = \frac{\partial y_\theta}{\partial \theta}$ that are used in this paper (which would correspond to a linear cost function of the form $\texttt{Cost}(y_\theta, y) = y_\theta - g(y)$, where $g(y)$ is any function independent of $y_\theta$). We briefly show here that with a slight modification our results hold for any differentiable cost function.

Using the chain rule we can see that the $\theta_j$ gradient can be written
\begin{equation}
    \tilde{C}_j = \frac{\partial\texttt{Cost}(y_\theta, y)}{\partial \theta_j} = \frac{\partial\texttt{Cost}(y_\theta, y)}{\partial y_\theta} \cdot \frac{\partial y_\theta}{\partial \theta_j}.
\end{equation}
If the value of $\frac{\partial\texttt{Cost}(y_\theta, y)}{\partial y_\theta}$ can be calculated, which would be possible if we were given the output of the model $y_\theta$, along with the data labels $y$, then it would be possible to directly convert between $\tilde{C}_j$ and $C_j$. When this is not the case, it is still possible to attempt a solution, although it requires one additional gradient than usual. We show this by considering that the $\frac{\partial\texttt{Cost}(y_\theta, y)}{\partial y_\theta}$ term, although unknown, will be the same for all $\theta_j$. Hence we can eliminate this term considering ratios of the known gradients defined as 

\begin{equation}
    R_j \equiv \frac{\tilde{C}_j}{\tilde{C}_1} = \frac{\frac{\partial y_\theta}{\partial \theta_j}}{\frac{\partial y_\theta}{\partial \theta_1}}.
\end{equation}

This allows us to write new equations for all $j >1$ without this unknown term as
\begin{equation}
    \frac{\partial y_\theta}{\partial \theta_j}= R_j \cdot \frac{\partial y_\theta}{\partial \theta_1},
\end{equation}
hence in the notation used throughout this work 
\begin{equation}
    C_j = R_j \cdot C_1.
\end{equation}
In Eq~\ref{eqn:grad-in-exp-snap} it was previously shown that the $C_j$ gradients can be written in terms of snapshot components as
\begin{equation}\label{eqn:grad-in-exp-snap-2}
    C_j = \sum_{t=1}^{\text{dim}(\mathfrak{g})} \chi_t^{(j)}( \mathbf{e}_{\text{snap}})_t. 
\end{equation}
Hence, given $\text{dim}(\mathfrak{g})+1$ gradients $C_j$, we can construct $\text{dim}(\mathfrak{g})$ simultaneous equations 

\begin{equation}\label{eqn:grad-in-exp-snap-expanded}
    C_j - R_j \cdot C_1 = \sum_{t=1}^{\text{dim}(\mathfrak{g})}\Big( \chi_t^{(j)}( \mathbf{e}_{\text{snap}})_t - R_j \chi_t^{(1)}( \mathbf{e}_{\text{snap}})_t \Big) = 0 .
\end{equation}
Hence, we have a system of equations that can be solved as before, although as we can only technically find the ratios between gradients we will require $\text{dim}(\mathfrak{g})+1$ gradients as opposed to $\text{dim}(\mathfrak{g})$ gradients.

\section{High Frequency Trigonometric Models}\label{app:trig_input}
 
The Fourier series picture was utilized in \cite{kumar2023expressive} to analyze the privacy of VQC models. In this section, we examine this picture with regards to privacy analysis.

Quantum models admit a natural decomposition into a Fourier series, which means that if the model is constructed such that it contains high-frequency terms, then it gains certain privacy advantages  \cite{kumar2023expressive}. This is not a uniquely quantum effect however, and in this section, we shall show how it can be recreated with a classical linear classifier equipped with a trigonometric feature map.

\subsection{Quantum Models are Linear Classifiers with Trigonometric Feature Maps}

The connection between Quantum models and Fourier series has previously informed the construction of certain dequantisation techniques \cite{Schreiber_surrogates, sweke2023potential}. Classical surrogate models are capable of being run on classical machines and are simultaneously able to match the output of a quantum model to within some error.  It is known that the model output of a QML model as a Fourier series \cite{Schuld_2021} in the following form

\begin{align}
    y(\boldsymbol\theta) = \sum_{\omega \in \Omega} A_{\omega} e^{i \omega x},
\end{align}
where the $A_\omega $ coefficients depend on the entire VQC architecture, but the frequency spectrum $\Omega$ depends only on the encoding architecture. If we define
\begin{equation}
    b_{ \omega}(\boldsymbol\theta) := A_{\omega}(\boldsymbol\theta)  + A_{-\omega} (\boldsymbol\theta),
\end{equation}
\begin{equation}
    d_{ \omega}(\boldsymbol\theta) := i(A_{\omega}(\boldsymbol\theta)  - A_{-\omega} (\boldsymbol\theta)).
\end{equation}

Then the model output can be written as
\begin{align}
    y(\boldsymbol\theta) = a_0 + \sum_{\omega \in \Omega} \big( b_{\omega} \cos{ (\omega x)} + d_{\omega}\sin{(\omega x)} \big).
\end{align}
We can rewrite the model as a linear regression \cite{landman2022classically} defined by the inner product between a vector of coefficients  $\boldsymbol A $ and a trigonometric feature map $\phi( x)$ as follows
\begin{align}
    y(\boldsymbol\theta) = \boldsymbol A \cdot \phi( x),
\end{align}
where 
\begin{equation}
    \boldsymbol A = \sqrt{| \Omega|}(a_0, b_{\boldsymbol \omega_1}, d_{\boldsymbol \omega_1} ... b_{\boldsymbol \omega_\delta}, d_{\boldsymbol \omega_\delta}),
\end{equation}
\begin{equation}
    \phi( x) = \frac{1}{\sqrt{| \Omega|}} \big( 1, \cos{( \omega_1 x)}, \sin{( \omega_1 x)},..., \cos{( \omega_\delta x)}, \sin{( \omega_\delta x)} \big).
\end{equation}
Hence, a VQC model can effectively be thought of as a linear regression on data that has been transformed by a trigonometric feature map  $\phi( x)$. However, a key difference is that the coefficients in this linear regression model are not directly varied in the quantum case, but the coefficients  $\boldsymbol A $  have a dependency on the $\theta$ values in the variational circuit and therefore evolve as the $\theta$ values are adjusted during training. 

\subsection{ Classical High Frequency Trigonometric Models }

We consider the case of a fully classical model with a classical attack model. In this regime, we encode the data using some feature map $\phi(x)$ and then perform linear regression on the resulting input. This can be considered as a single-layer neural network with a single output neuron with the identity of an activation. Note that in the quantum case, the corresponding $\phi(x)$ is exponentially large. If we maintained this condition here, then the classical model would not even be able to be evaluated (as even storing $\phi(x)$ would require exponential resources), and hence in some sense, we would achieve a trivial form of privacy via considering a model we do not have the resources to run in the first place. We therefore assume a polynomial number of $\phi(x)$ values in the classical model. This setup,while technically distinct, is reminiscent of the technique of Random Fourier Features \cite{landman2022classically} which been shown to be successful in creating classical surrogates of quantum models by utilizing a polynomial number of Fourier features.

The classical model output is defined by
\begin{align}
    y(\boldsymbol\theta) = \boldsymbol A \cdot \phi(x),
\end{align}
where in the classical case the coefficients $A$ are adjusted variationally directly, we are effectively performing a linear regression. We can write the gradients as
\begin{equation}
    C_j = \frac{\partial \texttt{Cost}}{\partial A_j} = -2(y_i - \mathbf{A} \cdot \phi( x)) \Big[  \phi( x) \Big]_j.
\end{equation}
This allows simple relations between all the parameters to be written using the fact that
\begin{equation}
    \frac{C_k}{C_l} = \frac{[  \phi( x) ]_k}{[  \phi( x) ]_l},
\end{equation}
which if we substitute into the normalisation condition $|\phi(x)|^2 = 1$. Then we obtain a closed-form solution
\begin{equation}
    [  \phi( x) ]_i = \frac{C_i}{\sum_j C_j^2}.
\end{equation}
Hence, it is easy to recover the $\phi(x)$ values from the gradients. If we have $\phi(x)$ then we can also simply invert the equation to find $x$ as if we know that $[  \phi(\boldsymbol x) ]_k = \cos(\omega_k x)$ then we can find
\begin{equation}
    x = \frac{1}{\omega_k} \cos ^{-1}([ \phi( x) ]_k).
\end{equation}
This demonstrates that the feature map in the classical case is both retrievable and invertible (up to periodicity unless we enforce the feature map to be injective) and hence there is no analytical privacy in the feature space.

 An extension of this would be to use a neural network with the feature map $\phi(x)$ for the inputs. As long as the $\phi(x)$ values can be recovered by considering the neural network equations, then it will be possible to invert his feature map. Trigonometric feature maps of this structure have found success in some classical machine learning applications, such as learning high-frequency functions in low-dimensional spaces \cite{tancik2020fourier}.

 \subsection{Quantum High Frequency Trigonometric Models}

In the quantum case, if one assumes an exponentially large $\phi(x)$ generated by the quantum variational model as was the case in \cite{kumar2023expressive}. Then this would not be a problem for a classical model  using a trigonometric feature map as the ratio of specific gradients could be taken such that,
\begin{equation}
    \frac{C_k}{C_{k+1}} = \frac{[  \phi( x) ]_k}{[  \phi( x) ]_{k+1}} = \frac{\sin (\omega_k x)}{\cos (\omega_k x)} = \tan (\omega_k x),
\end{equation}
which would still allow one to invert and find $x$ as
\begin{equation}
    x = \frac{1}{\omega_k} \cos ^{-1}\Big( \frac{C_k}{C_{k+1}} \Big),
\end{equation}
using only two gradient values. However, in a quantum variational circuit, we do not train the Fourier coefficients $A$ directly, but we train the gate parameters $\theta$ that themselves influence the coefficients $A$. Hence, we cannot perform the same technique to recover this ratio of feature vector elements.

If we knew the coefficients $A$ and $\frac{\partial \texttt{Cost}}{\partial A_j}$, then it would be an easy task. However, we only have access to the $\theta$ gradients that can be written

\begin{equation}
    C_j = \frac{\partial \texttt{Cost}}{\partial \theta_j} = -2(y_i - \boldsymbol A \cdot \phi( x)) \Big( \frac{\partial \boldsymbol A}{\partial \theta_j} \cdot \phi( x) \Big).
\end{equation}
A solution to this system of equations would require calculating the inner product between exponentially large vectors. Even if one attempts a similar trick to the classical case to eliminate the $\boldsymbol A \cdot \phi(\boldsymbol x)$ term this would yield 
\begin{equation}
    \frac{C_k}{C_l} = \frac{\Big( \frac{\partial \boldsymbol A}{\partial \theta_k} \cdot \phi( x) \Big)}{\Big( \frac{\partial \boldsymbol A}{\partial \theta_l} \cdot \phi( x) \Big)},
\end{equation}
which would still require calculating the inner product of exponentially large vectors (before even considering the challenge of finding the $\frac{\partial \boldsymbol A}{\partial \theta_j}$ terms).

Hence, even if the feature map $\phi(x)$ is easy to invert, we see that the privacy of quantum variational circuits in the trigonometric feature map space can derive from the fact $\phi(x)$ is not recoverable. This is analogous to the snapshot recovery discussed in the main work.

While both classical and quantum high frequency trigonometric models provide protection against machine learning attacks and analytical Attacks in the input space. The exponential nature of quantum models provides analytical privacy in the trigonometric feature vector space, while polynomial-sized classical models do not exhibit privacy in feature map space and are feature vector recoverable. This section has shown explicitly how an exponential number of frequencies provides protection in the quantum case, against analytical attacks, when considering the Fourier space. However, they may still be vulnerable to the snapshot recovery and inversion techniques specified in the main text.

\subsection{Results on Trainability of VQCs}\label{app:trainability}

One of the main trainability problems that plague VQCs is exponentially vanishing gradients, more commonly known as the barren plateau (BP) problem \cite{larocca2024review, mcclean2018barren, larocca2022diagnosing}. The BP problem has been characterized \cite{fontana2023adjoint, ragone2023unified} for a restricted class of VQCs known as Lie algebraic supported ansatz (LASA), which cover a wide variety of commonly used models. Hence, when looking for trainable quantum models, it is easiest to restrict the LASA setting, which is what is done in this paper. Furthermore, it can be shown that under a stronger yet potentially more reasonable definition of vanishing gradients for VQCs, a necessary condition for a LASA to avoid a BP is to have a dynamical Lie algebra (DLA) with a polynomial dimension. It has been conjectured that this claim is far-reaching, in the sense that models that avoid BPs evolve within spaces that are polynomially sized, in terms of the number of qubits \cite{cerezo2023does}.

The following theorem presents a closed form expression for the variance, over uniform parameter initialization, of the quantum circuit gradient for input $\mathbf{x}$, denoted $\textup{GradVar}_{\mathbf{x}}$.
\begin{theorem}[Variance of Gradient Theorem 2.9 \cite{fontana2023adjoint}]
\label{thm:gradient thm}
Consider ansatz $\mathbf{U}(\boldsymbol{\theta})$ with DLA $\g$ admitting a decomposition into simple ideals $\g_\alpha$ and its center $\mathfrak{c}$. Given the input state $\rho(\mathbf{x})$ and if the measurement operator $i\boldsymbol{O} \in \g$ (LASA condition), then the variance of the gradient for the classical input $\mathbf{x}$, denoted by $\textup{GradVar}_{\mathbf{x}}$, is given by,
    \begin{equation}\label{eqn:gradvar-fulldetail}
        \textup{GradVar}_{\mathbf{x}} = \sum_{\alpha}\frac{\|\mathbf{H_{\g_\alpha}}\|^2_K\|\mathbf{O}_{\g_\alpha}\|^2_F \|\rho_{\g_\alpha}(\mathbf{x})\|^2_F}{\textup{dim}(\g_\alpha)^2},
    \end{equation}
where the subscript $\alpha$ under the operators $\mathbf{H}$, $\mathbf{O}$ and $\rho$ denotes the orthogonal projection (under Frobenius inner product) onto the ideal $\g_\alpha \subseteq \g$, %
Further, $\|\mathbf{H}\|^2_K$ is the Killing norm \cite{somma2004nature} of the generator of the parameter $\theta$ with respect to the gradient which is computed, i.e., $e^{-\theta i\mathbf{H}}$ in the ansatz $\mathbf{U}(\boldsymbol{\theta})$. The Killing norm is defined to be the Frobenius norm of the operator $\textup{ad}_{i \mathbf{H}}$.
\end{theorem}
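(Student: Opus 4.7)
The plan is to derive the variance via Lie-theoretic averaging: write the gradient in terms of adjoint actions that stay inside $\g$, decompose the DLA along its center plus simple ideals, and then apply Schur-type second-moment identities on each ideal.

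First, for a parameter $\theta_j$ with generator $\mathbf{H}_j$, I would split the ansatz as $\mathbf{U}(\boldsymbol\theta) = \mathbf{U}_{>j}\mathbf{U}_{\leq j}$ and write
$$\partial_j y_{\boldsymbol\theta}(\mathbf{x}) = i\,\text{Tr}\bigl(\bigl[\mathbf{H}_j,\; \text{Ad}_{\mathbf{U}_{>j}^\dagger}(\mathbf{O})\bigr]\,\text{Ad}_{\mathbf{U}_{\leq j}}(\rho(\mathbf{x}))\bigr).$$
Under the LASA condition $i\mathbf{O}\in\g$, the conjugation $\text{Ad}_{\mathbf{U}_{>j}^\dagger}(i\mathbf{O})$ stays inside $\g$, so after using cyclicity and the identity $\text{Tr}([\mathbf{A},\mathbf{B}]\mathbf{C}) = \langle i\mathbf{A}, \text{ad}_{i\mathbf{B}}(i\mathbf{C})\rangle_F$ (up to signs), the gradient is expressible as a Frobenius inner product of three elements, two of which lie in $\g$ and the third whose contribution only sees its orthogonal projection onto $\g$. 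This explains why only $\mathbf{H}_{\g_\alpha}$, $\mathbf{O}_{\g_\alpha}$, and $\rho_{\g_\alpha}(\mathbf{x})$ appear in the final formula.

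Next, I would invoke the decomposition $\g = \mathfrak{c}\oplus\bigoplus_\alpha \g_\alpha$ into its abelian center and simple ideals, which lifts to $\mathcal{G}\simeq Z\times\prod_\alpha G_\alpha$. Because distinct ideals commute and the center acts trivially under $\text{ad}$, the adjoint action factorizes across $\alpha$, and the squared gradient decomposes as a sum of independent contributions, one per ideal. Under the working assumption that the ansatz is expressive enough that uniform random $\boldsymbol\theta$ induces, on each $G_\alpha$, a distribution whose second moments match those of the Haar measure, the variance calculation reduces to Haar integrals of quadratic functionals of the adjoint representation on each $\g_\alpha$.

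The core Lie-theoretic step is applying Schur's lemma to the irreducible adjoint representation of $G_\alpha$ on $\g_\alpha$. The unique (up to scalar) $\text{Ad}$-invariant bilinear form on a simple Lie algebra is the Killing form, and the second-moment identity takes the schematic shape
$$\int_{G_\alpha} \bigl\langle \mathbf{A},\, \text{Ad}_{\mathbf{U}_\alpha}\mathbf{B}\bigr\rangle \bigl\langle \mathbf{C},\, \text{Ad}_{\mathbf{U}_\alpha}\mathbf{D}\bigr\rangle\, d\mu_H(\mathbf{U}_\alpha) = \frac{\langle\mathbf{A},\mathbf{C}\rangle\,\langle\mathbf{B},\mathbf{D}\rangle}{\dim(\g_\alpha)},$$
with the correct mix of Killing vs.\ Frobenius pairings tracked through the conversion constants between the two forms on $\g_\alpha$. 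Applying this identity twice (once for the $\mathbf{U}_{>j}$ integration around $i\mathbf{O}$, once for the $\mathbf{U}_{\leq j}$ integration around $\rho(\mathbf{x})$) produces the factor $1/\dim(\g_\alpha)^2$, with $\|\mathbf{H}_{\g_\alpha}\|^2_K$ arising naturally from the $\text{ad}_{i\mathbf{H}_j}$ appearing between the two Schur projectors and $\|\mathbf{O}_{\g_\alpha}\|^2_F$, $\|\rho_{\g_\alpha}(\mathbf{x})\|^2_F$ arising from the self-pairings of the observable and state projections.

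The main obstacle is the bookkeeping in the Schur-lemma step: one must carefully track the distinct roles of the Killing and Frobenius inner products so that the final formula has exactly one Killing norm (for $\mathbf{H}$) and two Frobenius norms (for $\mathbf{O}$ and $\rho(\mathbf{x})$), with the normalization constants on each simple ideal absorbed correctly. A secondary technical burden is justifying the Haar-moment assumption: one must argue that the uniform distribution on $\boldsymbol\theta$ effectively generates a 2-design on each $G_\alpha$, which for the LASA setting amounts to the ansatz densely covering $\mathcal{G}$ with well-mixed layers. The contribution from the center $\mathfrak{c}$ vanishes because $\text{ad}$ acts trivially there, confirming that the sum in the stated formula runs only over simple ideals.
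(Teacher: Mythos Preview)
This theorem is not proved in the paper: it is quoted verbatim from \cite{fontana2023adjoint} (as the attribution ``Theorem~2.9'' in the statement indicates) and used as background for the trainability discussion in Appendix~\ref{app:trainability}. There is therefore no proof in the present paper to compare against.

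That said, your sketch is faithful to the argument in the original source. The essential ingredients---writing $\partial_j y_{\boldsymbol\theta}$ via the split $\mathbf{U}=\mathbf{U}_{>j}\mathbf{U}_{\leq j}$ so that the adjoint actions stay in $\g$ under LASA, decomposing $\g=\mathfrak{c}\oplus\bigoplus_\alpha\g_\alpha$ and observing that the center drops out because $\text{ad}$ vanishes there, and then applying Schur's lemma to the irreducible adjoint representation on each simple $\g_\alpha$ to evaluate the Haar second moments---are exactly the steps of the Fontana \emph{et al.}\ proof. Your identification of the two main technical burdens (careful bookkeeping between the Killing and Frobenius inner products so that $\mathbf{H}$ picks up the Killing norm while $\mathbf{O}$ and $\rho$ pick up Frobenius norms, and the need to justify that uniform $\boldsymbol\theta$ induces Haar-like second moments on each $G_\alpha$) is also accurate; in the original work the latter is handled by an explicit 2-design assumption on the ansatz.
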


A VQC is defined to be trainable when gradients can be efficiently estimated.
\begin{definition}[Trainability of VQC] \label{def:trainable}
    The VQC is considered trainable for input $\mathbf{x}$ if GradVar satisfies the condition,
    \begin{equation}
        \textup{GradVar}_{\mathbf{x}} = \mathcal{O}\left(\frac{1}{\text{poly}(n)}\right),
    \end{equation}
    where $n$ is the number of qubits.

\end{definition}

\begin{fact}[DLAs are Reductive \cite{fontana2023adjoint, ragone2023unified}]
The DLA admits the following orthogonal (in Frobenius inner product) decomposition:
 \begin{align}
 \label{eqn:splitting}
\g = \bigoplus_{\alpha}\g_{\alpha} \oplus \mathfrak{c},
\end{align}
where each $\g_{\alpha} \subset \g$ is a simple ideal (a minimal Lie subalgebra satisfying $\forall \mathbf{H} \in \g, \forall \mathbf{K} \in \g_{\alpha}, [\mathbf{H}, \mathbf{K} ] \in \g_{\alpha}$) and $\mathfrak{c} \subset \g$ is the center of $\mathfrak{g}$.
\end{fact}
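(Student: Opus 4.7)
The plan is to exploit that $\g = \textup{span}_{\mathbb{R}}\langle i\mathbf{H}_1,\ldots,i\mathbf{H}_D\rangle_{\textup{Lie}}$ sits inside the skew-Hermitian operators $\mathfrak{u}(2^n)$, and to use the Frobenius inner product in the form $\langle X, Y\rangle := -\mathrm{Tr}(XY)$ as a positive-definite real inner product on $\g$. The essential structural input is that this inner product is $\mathrm{ad}$-invariant: for $X,Y,Z\in\g$, cyclicity of the trace together with $Z^{\dagger}=-Z$ yield
\begin{equation}
\langle [Z,X], Y\rangle + \langle X, [Z,Y]\rangle = 0.
\end{equation}
This is the compactness-type property that forces complete reducibility, and it is precisely where being a Lie subalgebra of $\mathfrak{u}(2^n)$ (rather than of $\mathfrak{gl}(2^n)$) enters the argument.

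First I would prove the key lemma: for every ideal $\mathfrak{h}\subseteq\g$, its orthogonal complement $\mathfrak{h}^{\perp}$ is again an ideal, and moreover $[\mathfrak{h},\mathfrak{h}^{\perp}]\subseteq\mathfrak{h}\cap\mathfrak{h}^{\perp}=\{0\}$, so the two ideals commute. Indeed, for $Y\in\mathfrak{h}^{\perp}$, $Z\in\g$, $X\in\mathfrak{h}$, ad-invariance gives $\langle [Z,Y],X\rangle = -\langle Y,[Z,X]\rangle = 0$ because $[Z,X]\in\mathfrak{h}$. Applied to $\mathfrak{h}=\mathfrak{c}:=Z(\g)$, which is manifestly an ideal, this yields the orthogonal split $\g=\mathfrak{c}\oplus\mathfrak{s}$ with $\mathfrak{s}:=\mathfrak{c}^{\perp}$ an ideal whose own center is trivial (any element of $Z(\mathfrak{s})$ also commutes with the central $\mathfrak{c}$, hence lies in $\mathfrak{c}\cap\mathfrak{s}=\{0\}$).

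Next I would iterate the lemma inside $\mathfrak{s}$: pick a nonzero $\g$-ideal $\g_1\subseteq\mathfrak{s}$ of minimal dimension, split $\mathfrak{s}=\g_1\oplus\g_1^{\perp}$ into commuting ideals of $\g$, and repeat on $\g_1^{\perp}$. Since $\dim\g<\infty$ this terminates and gives $\mathfrak{s}=\bigoplus_{\alpha}\g_{\alpha}$ as a Frobenius-orthogonal sum of pairwise-commuting minimal $\g$-ideals. Each $\g_{\alpha}$ is then simple as a Lie algebra: any ideal $\mathfrak{n}$ of $\g_{\alpha}$ is automatically an ideal of all of $\g$ (using $[\mathfrak{c},\mathfrak{n}]=0$ and $[\g_{\beta},\mathfrak{n}]\subseteq[\g_{\beta},\g_{\alpha}]=\{0\}$ for $\beta\neq\alpha$), so minimality forces $\mathfrak{n}\in\{0,\g_{\alpha}\}$; and $\g_{\alpha}$ is non-abelian, since otherwise it would commute with all of $\g$ and thus lie in $\mathfrak{c}\cap\mathfrak{s}=\{0\}$. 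Combining the two steps yields exactly Eq.~\eqref{eqn:splitting}. The only real obstacle is this last classification step: one must be careful to use the ``pairwise commuting'' conclusion of the key lemma to promote ideals of the individual $\g_{\alpha}$ up to ideals of $\g$, and this promotion relies on the positive-definite ad-invariant form, which is why the skew-Hermitian structure of the DLA is irreplaceable here, and the analogous decomposition would fail for a generic Lie subalgebra of $\mathfrak{gl}(2^n)$.
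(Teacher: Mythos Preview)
Your argument is correct and is the standard proof that any real Lie subalgebra of $\mathfrak{u}(N)$ is reductive: the negative-trace form is positive definite on skew-Hermitian operators and $\mathrm{ad}$-invariant, so orthogonal complements of ideals are again ideals, and the decomposition follows by induction on dimension. The paper itself does not supply a proof of this Fact---it simply cites \cite{fontana2023adjoint, ragone2023unified}---so there is no ``paper's approach'' to compare against; your write-up fills in precisely what the paper takes as given, and your remark that the skew-Hermitian structure (equivalently, compactness of the associated group) is what makes the argument work is exactly the right caveat.
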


This expression reveals the explicit dependence that the gradient variance has on the DLA dimension in the LASA setting, which is the reasoning behind the conjectured necessity that the DLA dimension must be polynomial to avoid BPs. More specifically, we can immediately see from Theorem~\ref{thm:gradient thm} that as long as ${\text{dim}(\mathfrak{g})}$ scales polynomially in $n$, and the Frobenius norm of the projection of input state in the DLA as well as the measurement operator is not vanishingly small in $n$, the variance of the gradient does not decay exponentially in $n$ and thus leads to a trainable model.

Another source of untrainability of the model is when the cost function optimization landscape is swamped with spurious local minima which renders it unamenable to any efficient optimizer to find a good approximation to the optimal solution. As shown by \cite{anschuetz2022}, models in the \emph{underparameterized phase} characterized by trainable parameters fewer than the degrees of freedom of the system exhibit the above spurious local minima behavior. A phase transition occurs when the number of parameters scale with the system's degrees of freedom where the local minima concentrate around the global minimum rendering it more efficient for the optimizers to converge to a good approximate solution. This is referred to as the \emph{overparameterized phase}. The work by \cite{larocca2023theory} shows that overparameterization can be achieved with the number of model parameters scaling as the size of the DLA dimension. Thus for $\text{poly}(n)$ sized DLA, overparameterization requires polynomial ansatz depth, thus ensuring trainability. 
\end{document}